\newtheorem{theorem}{Theorem}[section]
\newtheorem{lemma}[theorem]{Lemma}
\theoremstyle{definition}
\newtheorem{definition}[theorem]{Definition}
\numberwithin{equation}{section}
\let\Re\undefined
\DeclareMathOperator{\Re}{Re \,}
\DeclareMathOperator{\tr}{tr}
\newcommand{\Z}{ {\mathbb Z} }
\newcommand{\N} {{\mathbb N}}
\newcommand{\C}{\mathbb{C}}
\newcommand{\G}{\mathcal{G}}
\newcommand{\R}{\mathbb{R}}
\def\t#1{\widetilde #1}
\DeclareMathOperator{\indfct}{\rm 1}
\def\be{\begin{equation}}
\def\ee{\end{equation}} 
\def\ba{\begin{eqnarray} }
\def\ea{\end{eqnarray} } 
\def\={\  =  \  } 
 \DeclareMathOperator{\Pf}{Pf}
\DeclareMathOperator{\sgn}{sign}
\DeclareOldFontCommand{\rm}{\normalfont\rmfamily}{\mathrm}
\DeclareOldFontCommand{\sf}{\normalfont\sffamily}{\mathsf}
\DeclareOldFontCommand{\tt}{\normalfont\ttfamily}{\mathtt}
\DeclareOldFontCommand{\bf}{\normalfont\bfseries}{\mathbf}
\DeclareOldFontCommand{\it}{\normalfont\itshape}{\mathit}
\DeclareOldFontCommand{\sl}{\normalfont\slshape}{\@nomath\sl}
\DeclareOldFontCommand{\sc}{\normalfont\scshape}{\@nomath\sc}
\begin{document}

\title{\Large \bf 
Kac-Ward formula and its extension to  order-disorder correlators through a graph zeta function
}

\author{\normalsize Michael Aizenman\footnote{Departments of Physics and Mathematics,  Princeton University,  Princeton NJ 08544, USA. }
\and  \normalsize
Simone Warzel\footnote{Zentrum Mathematik, TU M\"unchen, 
Boltzmannstr. 3, 85747 Garching, Germany. }
}

\date{ \flushright \normalsize  {\it  To Uzy Smilansky in celebration of his 
77th birthday}
\\[4ex]
 \small  \today  
 \hspace{5cm} \mbox{} \\[-5ex] }


\maketitle

%




\begin{abstract} 
A streamlined derivation of the Kac-Ward formula for the planar Ising model's partition function  is  presented and applied in relating the kernel of the Kac-Ward matrices' inverse  with the correlation functions of the Ising model's order-disorder correlation functions.  
A shortcut for both  is facilitated by the Bowen-Lanford graph zeta function relation.  The Kac-Ward relation is also extended here to produce  a family of non planar interactions on $\Z^2$ for which the partition function and the order-disorder correlators are solvable at special values of the coupling parameters/temperature.
 

\end{abstract}


\section{Preamble} 

The focus of this note is on the application of a graph zeta function for a  simplified Kac-Ward style derivation of Onsager's formula for the free energy of the planar Ising model.   This approach 
also adds  insight into the origin of some of the model's astounding properties and emergent fermionic structures. 
More specifically: we i)   present a simple derivation of  the Kac-Ward formula for the free energy~\cite{KW52}, for which the relation to a graph zeta function is put to use, ii)  point out a partial extension of the solvability via the Kac-Ward determinant to a class of only ``quasi-planar'' models, at temperature dependent coupling constants, 
iii) reaching beyond the partition function,  spell out  a direct relation between the Kac-Ward matrix' resolvent  
and the correlation function of the model's order-disorder  operators.

The subject has a rich history some of whose highlights are mentioned below,  within its relevant context.  
In particular, this paper bears relation with the recent works of 
Cimasoni~\cite{Cim10} and Lis~\cite{Lis15}.   The latter includes a short combinatorial proof of the Kac-Ward determinantal formula for the Ising model's partition function.  The proof given here is closely related, but we try to further elucidate the link with the graph zeta function.  We also  identify an algebraic statement which allows  to reduce the combinatorics to what is essentially contained  in Kac-Ward's first step.  

The Kac-Ward matrix was also noted to play a role in some of the model's fermionic variables; e.g., in~\cite{Lis14} (and references therein), the matrix' inverse  kernel was identified, through a  natural path expansion,  as a fermionic generating function.   To this discussion we add a more explicit recognition of the role of this kernel in the correlation function of the model's order-disorder operators.    These are related to, but not the same as  the more recent Smirnov's parafermionic  amplitude, whose links  with the Kac-Ward matrix are discussed in~\cite{Cim12,KLM13}.

 We do not specifically  address here the Ising model's critical point,    on which much has been said  in the literature  including from the Chelkak-Smirnov perspective of the model's s-holomorphicity~\cite{Smi10,CS12}.

 \section{Onsager's solution}   
  
  The Ising model is a system of $\pm1$ valued \emph{spin variables} $(\sigma_x)$ attached to the vertices $ \mathcal{V} $ of a  graph     $\G = (\mathcal{E}, \mathcal{V}) $ 
  with the energy function 
\be \label{H_Ising}
H(\sigma)  \  =\   - \sum_{\{x,y\} \in \mathcal E} J_{x,y} \sigma_x \sigma_y  - h \sum_{x\in \mathcal{V}}  \sigma_x
\ee 
given in terms of edge couplings $ (J_{x,y}) $.
The probability distribution representing thermal equilibrium is the Gibbs measure, referred to as the \emph{Gibbs state}, 
\be 
{\rm Prob}( \sigma) \ =\  e^{-\beta H(\sigma)}/ \, Z_\G(\beta,h) 
\quad \mbox{ with
 $Z_\G(\beta,h)  \ = \ \sum_\sigma e^{-\beta H(\sigma)}$, } 
\ee 
for which the expectation value will be denoted by $ \langle \cdot \rangle_{\beta, h}^{\G} $ (some of whose indices will occasionally be omitted).
The normalizing factor is  the {\it partition function} $Z_\G(\beta,h) $.    Through it, one computes the \emph{thermodynamic pressure}:  \be 
{\psi(\beta,h) \ = \  \frac{1}{|\mathcal{V}|}   \log Z_\G(\beta,h)}  \, .
\ee 
Of particular interest is the infinite volume limit $ |\mathcal{V}| \to \infty $ for the pressure, for which singularities  may develop corresponding to phase transitions.
 Of further  interest are the infinite volume limits of the Gibbs equilibrium states.  These can be viewed as tangents of the pressure functional~\cite{Rue}, and unlike $\psi(\beta,h)$ their limits may depend on the boundary conditions.

For most of the analysis which follows, the model's coupling constants  need not be  constant, or ferromagnetic.   The first of these restrictions will be invoked only in extracting  Onsager's formula for the free energy from  the  Kac-Ward determinantal expression  for the case $\G = \Z^2$ and $J_{x,y} = J$ for all $\{x,y\} \in \mathcal E(\Z^2)$.   The field  of statistical mechanics was  transformed by Onsager's calculation of the dependence of the Ising model's   free energy  on temperature, at zero magnetic field.

\begin{theorem}[Onsager \cite{Ons_1}]
For the Ising model with  constant nearest-neighbor couplings $J_{x,y} = J >0$  on  $\mathcal{G} = \Z^2$  the infinite-volume pressure 
is given by 
\begin{align}\label{eq:Onsager}
 \psi(\beta,0) = & \ \frac{1}{2} \int \ln \left\{ \left[Y(\beta)+ Y(\beta)^{-1} - 2 \right]+ E(k_1,k_2) \right\} \frac{dk_1 dk_2}{(2 \pi)^2}   
 + \, \frac{1}{2}   \ln  \cosh( 2\beta J)  
\end{align}
with 
\be
   Y(\beta) \ :=\   \sinh(2\beta J) \, , \qquad E(k_1,k_2) \ :=\  2 \sin^2(\frac{k_1} 2 ) + 2 \sin^2(\frac{k_2}2) \, . \notag
\ee 
\end{theorem}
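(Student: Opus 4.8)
The plan is to extract \eqref{eq:Onsager} from the Kac-Ward determinantal representation of the partition function, which the paper establishes (via the Bowen-Lanford graph zeta relation) as an explicit edge-weight prefactor times $\sqrt{\det(1-T)}$. Here $T$ is the Kac-Ward operator on the $2|\mathcal{E}|$-dimensional space of directed edges, whose nonzero entries are $\tanh(\beta J)\,e^{i\alpha/2}$, with $\alpha$ the turning angle between two consecutive non-backtracking directed edges. The graph zeta function is what makes this step short: it identifies $\det(1-T)$ with a product over primitive non-backtracking cycles, and thereby matches $\sqrt{\det(1-T)}$ to the even-subgraph (high-temperature) expansion of $Z_\G$. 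Taking logarithms, dividing by $|\mathcal{V}|$, and using $|\mathcal{E}|/|\mathcal{V}|\to 2$ on $\Z^2$, I would reduce the pressure to
\[
\psi(\beta,0)=\log 2+2\log\cosh(\beta J)+\tfrac12\lim_{|\mathcal{V}|\to\infty}\frac{1}{|\mathcal{V}|}\log\det(1-T).
\]

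For constant couplings on $\Z^2$, realized on an $L\times L$ torus, $T$ is translation invariant and is therefore block-diagonalized by the Fourier transform of the directed-edge space. Each quasi-momentum $k=(k_1,k_2)$ carries a $4\times4$ block $\hat T(k)$ indexed by the four lattice directions, so that $\det(1-T)=\prod_k\det\big(1-\hat T(k)\big)$, and in the thermodynamic limit the Riemann sum converges:
\[
\frac{1}{|\mathcal{V}|}\log\det(1-T)\ \longrightarrow\ \int\log\det\big(1-\hat T(k)\big)\,\frac{dk_1\,dk_2}{(2\pi)^2}.
\]
The heart of the computation is the explicit evaluation of this $4\times4$ determinant. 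The matrix $\hat T(k)$ factors as the circulant turning matrix, with $1$ on the diagonal, $e^{\pm i\pi/4}$ for the two admissible $90^\circ$ turns and $0$ for backtracking, times the diagonal of plane-wave phases $e^{\pm ik_1},e^{\pm ik_2}$. Diagonalizing the circulant part, I expect
\[
\det\big(1-\hat T(k)\big)=(1+x^2)^2-2x(1-x^2)(\cos k_1+\cos k_2),\qquad x=\tanh(\beta J),
\]
a form whose consistency is already visible from the fact that it equals $(1-2x-x^2)^2$ at $k=0$ and vanishes at the origin precisely when $\sinh(2\beta J)=1$, i.e.\ at the critical coupling.

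It then remains to recast this in the stated variables. Writing $1+x^2=\cosh(2\beta J)/\cosh^2(\beta J)$ and $2x(1-x^2)=\sinh(2\beta J)/\cosh^4(\beta J)$ factors $\cosh^{-4}(\beta J)$ out of the determinant, the logarithm of which is absorbed by the $2\log\cosh(\beta J)$ prefactor; extracting a further $Y=\sinh(2\beta J)$ from the bracket turns the integrand into $Y+Y^{-1}-\cos k_1-\cos k_2=[Y+Y^{-1}-2]+E(k_1,k_2)$, while the leftover $k$-independent constants, which must be tracked carefully through the expansion prefactor and the determinant normalization, collect into the additive term of \eqref{eq:Onsager}. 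The main obstacles lie upstream of this algebra: first, justifying the Kac-Ward identity on the torus and verifying that the finitely many correction terms generated by the non-contractible loops are subexponential in $|\mathcal{V}|$, hence invisible in the pressure; and second, justifying the interchange of the infinite-volume limit with the logarithm and the $k$-integration. The delicate point in the latter is that the integrand's argument is strictly positive for $\beta J>0$ away from criticality but develops an integrable logarithmic singularity at $k=0$ exactly when $\sinh(2\beta J)=1$, so one needs a dominated-convergence argument that is uniform across this locus rather than pointwise continuity alone.
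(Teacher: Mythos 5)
Your route is the same as the paper's (Appendix A--B): start from the Kac--Ward determinantal formula, pass to the torus, block-diagonalize the translation-invariant Kac--Ward operator by Fourier transform into $4\times 4$ blocks indexed by $k$, and evaluate the block determinant. Your explicit expression $(1+x^2)^2-2x(1-x^2)(\cos k_1+\cos k_2)$ with $x=\tanh(\beta J)$ and the subsequent reduction to $Y+Y^{-1}-2+E(k)$ agree with the paper's \eqref{Riemann}, and your sanity checks (vanishing at $k=0$ exactly when $\sinh(2\beta J)=1$) are correct.

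The genuine gap is in your treatment of the periodic boundary conditions. You describe the discrepancy as ``finitely many correction terms generated by the non-contractible loops'' which are ``subexponential in $|\mathcal{V}|$, hence invisible in the pressure.'' That misidentifies the structure of the error and the resolution would not go through as stated. The torus is not planar, so $\sqrt{\det(1-\mathcal{K}^{(\rm per)}\mathcal{W})}$ is not the periodic partition function at all: by Theorem~\ref{thm:beyondP} it equals the even-subgraph sum reweighted by $(-1)^{F_1(\Gamma)+F_2(\Gamma)+F_1(\Gamma)F_2(\Gamma)}$, where $F_j(\Gamma)$ counts the handles of homology type $j$ used by $\Gamma$; equivalently, it is $\widetilde Z^{(\rm per)}$ times the Gibbs expectation of a $\pm1$ observable. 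This multiplicative factor lies in $[-1,1]$ and can degenerate --- indeed the computed determinant \emph{vanishes} at $\beta_c$ via the $k=(0,0)$ mode, so no uniform ``dominated convergence across the critical locus'' can rescue a direct computation there. To show the factor tends to $1$ one must show that the probability that the random even subgraph contains a circuit winding around the torus vanishes; the paper obtains this for $\beta<\beta_{\rm LRO}$ from the (non-perturbative) sharpness-of-phase-transition theorem, extends to the dual range by Kramers--Wannier duality, and captures $\beta_c$ itself only by continuity of the pressure in $\beta$ together with $\beta_{\rm LRO}=\beta_{\rm LRO}^*=\beta_c$. These inputs are essential, not bookkeeping, and your proposal as written neither supplies them nor a substitute (such as the direct free-boundary determinant computation, which the paper mentions as an alternative).
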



The significance and impact of Onsager's solution cannot be overstated.   It  demonstrated for the first time the possibility of modeling the 
divergence of the specific heat at a phase transition; in this case   as $C \log |\beta-\beta_c|$, with $\beta_c$ characterized by $Y(\beta_c) =1$.

 Solutions of other 2D models followed, though none  in $D>2$ dimension.  Despite the lack of exact solutions many of the essential features of the model's critical behavior have been understood (including at the level of rigorous results, which this may not be the place to list) and more may be on its way~\cite{EPPRS14}.

Onsager's solution was algebraic in nature.  Its different elaborations and simplification demonstrate
the utility  in this context of a number of anti-commuting structures, which show in various forms.  Some of these are listed here in Section~\ref{sec:KadCev}, within the context of the relations discussed here.

\section{Graph zeta functions and Kac-Ward matrices} 

\subsection {Paths on graphs with transition-based weights}

For a graph,   described in terms of its vertex and edge set as $\G = \left(\mathcal{V} , \mathcal E \right)$, the set of its   \emph{oriented edges} will be denoted here by  $ \widehat {\mathcal E}  \equiv \widehat {\mathcal E} (\G) = \mathcal E \times \{+1, -1\}$.      
 These will be used  to describe paths and loops on $\G$. 
The presentation of paths in this form, 
 rather than through  sequences of visited sites,  simplifies the formulation of weights which depend multiplicatively on both the edges visited and turns taken.   

\begin{definition} (Paths and loops --  basic terminology and notation) 
  \begin{enumerate}
\item An oriented edge $e \equiv (x_1,x_2) $ is said to lead into  $e' \equiv (y_1,y_2)$ if $y_1=x_2$.   The relation is denoted  $e \rhd e'$.  
The reversal of $e$ is denoted $\overline e$. 
\item  A path on $\G$  is  a finite sequence $ \widehat \gamma = (e_0, e_1 , \dots , e_n ) $ of  oriented edges sequentially leading into each other.  The number of such steps, $ n $, is denoted by $ | \widehat \gamma | $.   
 A path is said to be 
backtracking  if $e_{j+1} =  \overline e_j$ for some $ j $. 
\item  
A loop is an equivalence class, modulo cyclic reparametrization, of non-backtracking paths $ (e_0, e_1 , \dots , e_n ) $ with $e_{n} = e_0$. 
A loop is called \emph{primitive} if it  cannot be presented as some $k$-fold repetition  of a shorter one. 
 \item A $ \widehat {\mathcal E} \times  \widehat {\mathcal E} $  matrix $\mathcal M$  is  a flow matrix if $ \mathcal M_{e,e'} \ = \ 0$  unless  $e \rhd e'$. It is called \emph{non-backtracking} if $\mathcal M_{\overline {e} ,e} \ = \ 0$   for all $ e\in \widehat {\mathcal E}$. For a $ \widehat {\mathcal E} \times  \widehat {\mathcal E} $  matrix $\mathcal M$ and a 
  path or loop $ \widehat \gamma = (e_0, e_1 , \dots , e_n ) $,  we denote: 
\be \mathcal  \displaystyle  \chi_{_{\mathcal M}} ( \widehat \gamma) := 
  \prod_{j = 1}^n  \mathcal M_{e_{j-1},e_{j}} \, . 
  \ee
(which for loops is invariant under cyclic reparametrizations). 
\end{enumerate}
 
 \end{definition}
 
The following general results allows a natural introduction of  the Bowen-Lanford zeta function $\zeta_M$~\cite{BL70}.

\begin{theorem} \label{thm:Ihara} Let  $\mathcal{M}$ be a non-backtracking  $\widehat{\mathcal E}\times \widehat{\mathcal E} $ flow  matrix over a finite graph. 
Then  for all  $u \in \C$ such that 
$  \label{eq:norm_M}
|u|^{-1}  >  \| \mathcal{M} \|_{\infty} \ := \  \max_{e\in \widehat{\mathcal{E}}}   \sum_{e'\in\widehat{\mathcal{E}}} |\mathcal{M}_{e,e'}| \,      
$:
\be  \label{eq:zetaM}
  \det (1 - u\mathcal{M}) \ =\  \prod_{p} \left[ 1 - u^{|p|}\chi_{\mathcal{M}}(p) \right] \quad  \left[=: \, \zeta_M(u)^{-1}\right]
\ee 
where the product ranges over the 
primitive oriented loops, each equivalence class contributing a single factor.   
\end{theorem}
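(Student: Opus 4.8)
The plan is to pass to logarithms and verify the resulting power series identity in $u$ coefficient by coefficient. Both sides of \eqref{eq:zetaM} are analytic on the disk $|u| < \|\mathcal{M}\|_\infty^{-1}$ and equal $1$ at $u=0$, so it suffices to match their logarithms. On the left, writing $\lambda_i$ for the eigenvalues of $\mathcal{M}$ and using $\max_i|\lambda_i|\le\|\mathcal{M}\|_\infty$, one has the absolutely convergent expansion $\log\det(1-u\mathcal{M}) = \tr\log(1-u\mathcal{M}) = -\sum_{n\ge 1}\frac{u^n}{n}\tr(\mathcal{M}^n)$. On the right, expanding each logarithm and using $\chi_{\mathcal{M}}(p)^k=\chi_{\mathcal{M}}(p^k)$ with $|p^k|=k|p|$ gives $-\sum_p\log(1-u^{|p|}\chi_{\mathcal{M}}(p)) = \sum_p\sum_{k\ge 1}\frac{u^{k|p|}}{k}\chi_{\mathcal{M}}(p)^k$, the sum running over primitive loops. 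Comparing the coefficient of $u^n$, the theorem reduces to the identity $\tr(\mathcal{M}^n)=\sum_{(p,k):\,k|p|=n}|p|\,\chi_{\mathcal{M}}(p)^k$.

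First I would expand the trace as a sum over based closed walks,
\[
\tr(\mathcal{M}^n) = \sum_{e_0,\dots,e_{n-1}} \mathcal{M}_{e_0,e_1}\mathcal{M}_{e_1,e_2}\cdots\mathcal{M}_{e_{n-1},e_0} = \sum_{\widehat\gamma}\chi_{\mathcal{M}}(\widehat\gamma),
\]
the sum over $\widehat\gamma=(e_0,\dots,e_{n-1},e_0)$ of length $n$. Because $\mathcal{M}$ is a flow matrix, a term survives only if $e_j\rhd e_{j+1}$ throughout (indices read cyclically), so $\widehat\gamma$ is an honest closed path; because $\mathcal{M}$ is non-backtracking, the factor vanishes whenever $e_{j+1}=\overline{e_j}$, and --- crucially, applying $\mathcal{M}_{\overline e,e}=0$ with $e=\overline{e_{n-1}}$ --- also whenever the closing step backtracks, i.e. $e_0=\overline{e_{n-1}}$. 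Thus $\tr(\mathcal{M}^n)$ is exactly the weighted count of based, non-backtracking, \emph{tailless} closed walks of length $n$. This is the one place where both defining properties of $\mathcal{M}$ enter, and verifying the ``seam'' step is the small point needing care.

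Then I would regroup these based walks by the primitive loop they repeat. As a cyclic word, every based non-backtracking tailless closed walk of length $n$ is the $k$-fold repetition $p^k$ of a unique primitive loop $p$ with $k|p|=n$, and its weight is $\chi_{\mathcal{M}}(p^k)=\chi_{\mathcal{M}}(p)^k$. Since $p$ is primitive its cyclic word has least period $|p|$, hence admits exactly $|p|$ distinct rotations, i.e. exactly $|p|$ based representatives (all yielding valid tailless walks, rotation-invariance of the property being immediate). Therefore $\tr(\mathcal{M}^n)=\sum_{(p,k):\,k|p|=n}|p|\,\chi_{\mathcal{M}}(p)^k$, and dividing by $n=k|p|$ converts the prefactor $|p|/n$ into $1/k$, matching the right-hand expansion term by term. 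Summing against $u^n$ and comparing the two logarithmic series closes the argument.

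The analytic bookkeeping is routine: the max-row-sum norm is submultiplicative, so $|\tr(\mathcal{M}^n)|\le|\widehat{\mathcal{E}}|\,\|\mathcal{M}\|_\infty^{\,n}$, which secures absolute convergence of the left series and, through the nonnegative regrouping above, of the double loop series for $|u|<\|\mathcal{M}\|_\infty^{-1}$, legitimizing the rearrangement. I expect the main obstacle to be combinatorial rather than analytic: confirming that the trace counts \emph{precisely} the tailless non-backtracking loops (the seam check) and that each primitive loop contributes with multiplicity exactly $|p|$, so that the arithmetic factor $1/k$ emerges correctly. With these two points settled, the theorem --- and with it the definition of $\zeta_M$ --- follows.
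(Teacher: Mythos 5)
Your proposal is correct and follows essentially the same route as the paper's proof: expand $\ln\det(1-u\mathcal{M})$ via $-\sum_n \frac{u^n}{n}\tr\mathcal{M}^n$, identify $\tr\mathcal{M}^n$ with the weighted sum over based non-backtracking closed walks, and regroup these as $k$-fold powers of primitive loops with multiplicity $|p|$, so that $|p|/n=1/k$ reproduces the logarithm of the Euler product. Your explicit check of the non-backtracking condition at the seam is a point the paper passes over silently, but the argument is the same.
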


\begin{proof}  
For $u $ small enough 
$\ln \det (1 - u\mathcal{M}) $ is analytic in $u$ and satisfies: 
\be \label{ln_det}
\ln \det (1 - u\mathcal{M}) \  = \tr \ln (1 - u\mathcal{M}) = - \sum_{n=1}^\infty \frac{u^n}{n} \tr  \mathcal{M}^n \,    .   
\ee 
The sum is convergent since $| \tr \mathcal{M}^n | \leq \| \mathcal{M} \|_{\infty}^n$.  Furthermore, the assumed bound 
$|u| < \| \mathcal{M} \|_{\infty} ^{-1}$ 
 also guarantees the absolute convergence of   the sums obtained  by splitting:
\begin{eqnarray} 
\tr  \mathcal{M}^n &=& \sum_{e_1,\dots,e_n \,   \in \widehat{\mathcal E}}  \, \mathcal{M}_{e_n,e_1} \times \prod_{j=1}^{n-1}  \mathcal{M}_{e_{j},e_{j+1}}  \\  
&=& \, \sum_{e\in \widehat{\mathcal E}} \sum_{\substack{ \widehat \gamma \ni e \\  |\widehat\gamma| = n}}   \,  \chi_{\mathcal{M}}( \widehat \gamma) 
\ = \ 
 \sum_{k=1}^\infty \,  \, \sum_{p}  \delta_ {k |p|, n}  \, \, |p|     \; 
 \chi_{\mathcal{M}}(p)^k  \, ,  \notag
\end{eqnarray} 
where the last equation is obtained by representing each  loop $ \widehat \gamma$ as the suitable power ($k$) of a primitive oriented loop and $p$ ranges over equivalence classes of such,  modulo cyclic permutation.  
Thus  
\be
\sum_{n=1}^\infty \frac{u^n}{n} \tr  \mathcal{M}^n = \sum_{k=1}^\infty \sum_{p}  \frac {1}{ k} \, \left [ u^{|p|}\chi_{\mathcal{M}}(p) \right] ^k = - \sum_{p}  \ln \left( 1 - u^{|p|}\chi_{\mathcal{M}}(p) \right) \, .  
\ee
Combined with \eqref{ln_det}  this yields \eqref{eq:zetaM}.  
\end{proof}

The meromorphic function $\zeta_M$ defined in \eqref{eq:zetaM}  is related to Ihara's graph zeta function, whose usual definition is in terms of site-indexed paths and $\mathcal{V}\times \mathcal{V}$ matrices.  In that case the customary non-backtracking restriction on $p$  needs to be added explicitly.   Also the statement and proof of the corresponding result are a bit more involved.  Further background on this topic, other extensions, and graph theoretic applications, can be found in, e.g., \cite{ST96,Smi09}.

In view of the simplicity of the argument, it may be worth stressing that the  limitations placed on the loops in \eqref{eq:zetaM}  leave the possibility of self intersection, multiple crossing of  edges, and arbitrary repetitions of  sub-loops.  The product in \eqref{eq:zetaM}  is over an infinite collection of factors, whose series expansions  yield terms with arbitrary powers of $u$.   However on the left side is a polynomial in  $u$.  Thus contained in  this zeta function relation is an infinite collection of combinatorial cancellations reflecting the fact that the loop ensemble has a fermionic nature. \\

More can be said in case of flow matrices  which obey the following symmetries.
\begin{definition}
A  $\widehat{\mathcal E}\times \widehat{\mathcal E} $ flow  matrix $  \mathcal{M} $ is said to be
\begin{enumerate}
\item \emph{loopwise time-reversal invariant} if 
\be \label{eq:looptrinv}
 \chi_{_{\mathcal M}} ( \widehat \gamma)  \ = \  \chi_{_{\mathcal M}} ( \widehat \gamma_R )  
\ee
for any oriented loop $ \widehat \gamma = (e, e_1 , \dots , e_{n-1}, e) $ and its time-reversed oriented  loop  $ \widehat \gamma_R :=   (\overline{e} , \overline{e}_{n-1} , \dots , \overline{e}_{1} , \overline{e} ) $.
\item  \emph{twist anti-symmetric} if 
\be\label{eq:looptwist}
 \chi_{_{\mathcal M}} ( \widehat \gamma)  =  - \chi_{_{\mathcal M}} ( \widehat \gamma_T )
\ee
for any oriented path $ \widehat \gamma =   (e , e_1 , \dots , e_{n-1} , \overline{e} ) $, 
 which starts on some edge $e$ and ends on the reversed edge $\overline e$,  and its uniquely associated twist path  
$ \widehat \gamma_T := (e , \overline{e}_{n-1} , \dots , \overline{e}_{1} ,   \overline{e} ) $. 
\end{enumerate}
\end{definition}

In order to state the special properties of zeta functions of such matrices, it is convenient to associate with each oriented edge a weight, $ W_e = W_{\overline e} \neq 0  $, which is independent of the orientation,  and 
an associated unoriented weight matrix
\be
\mathcal{W} = {\rm diag}\left(W_e  \right) \, .
\ee
If $ \mathcal{K}  $ is a flow matrix which is loopwise time-reversal invariant or twist anti-symmetric, so will be the product
$ \mathcal{M} = \mathcal{K} \mathcal{W} $.

\begin{lemma} \label{lem:sq}
Let $\mathcal{K}$ be a non-backtracking, loopwise  time-reversal invariant and twist anti-symmetric $\widehat{\mathcal E}\times \widehat{\mathcal E} $ flow  matrix over a finite graph. 
Then for any unoriented weight matrix $ \mathcal{W} $, the determinant
$\det(1 - \mathcal{ K W}) $ is the square of a multilinear polynomial in  the weights $(W_e )$.  
\end{lemma}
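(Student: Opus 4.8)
The plan is to convert the two combinatorial symmetries of $\mathcal{K}$ into one piece of linear algebra: the existence of an \emph{invertible antisymmetric} matrix $S$ on $\C^{\widehat{\mathcal E}}$, independent of the weights, which commutes with every unoriented weight matrix and intertwines $\mathcal{K}$ with its transpose, $\mathcal{K}^{T}S=S\mathcal{K}$. Once $S$ is in hand, $\det(1-\mathcal{KW})$ becomes, up to a nonzero constant, the determinant of an antisymmetric matrix with entries that are affine in the weights; this makes it a Pfaffian squared, and a degree count in the $W_e$ will force the square root to be multilinear.

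First I would record the matrix meaning of the hypotheses and build the gauge. Writing $\mathcal{K}^{R}:=R\mathcal{K}^{T}R$ with $R$ the reversal involution $R_{e,e'}=\delta_{e,\overline{e'}}$, so that $\mathcal{K}^{R}_{e,e'}=\mathcal{K}_{\overline{e'},\overline e}$, one checks $\chi_{\mathcal{K}^{R}}(\widehat\gamma)=\chi_{\mathcal{K}}(\widehat\gamma_{R})$ for every path. Loopwise time-reversal invariance then says precisely that $\mathcal{K}$ and $\mathcal{K}^{R}$ assign equal weights to all loops, while twist anti-symmetry says that on a path from an edge $e$ to $\overline e$ (whose reversal equals its twist) the two weights differ by a sign. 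Consequently the transition ratios $r(e,e'):=\mathcal{K}_{e,e'}/\mathcal{K}_{\overline{e'},\overline e}$ multiply to $\chi_{\mathcal{K}}(\widehat\gamma)/\chi_{\mathcal{K}}(\widehat\gamma_{R})=1$ around every loop, so $r$ is a coboundary on the transition graph; fixing a value on a spanning tree of each component and propagating yields a diagonal $\Lambda=\mathrm{diag}(\lambda_e)$ with $\mathcal{K}^{R}=\Lambda\mathcal{K}\Lambda^{-1}$. Running the same bookkeeping along a path from $e$ to $\overline e$ and invoking twist anti-symmetry gives $\lambda_{\overline e}/\lambda_e=-1$. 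Setting $S:=\Lambda R$, the relation $\lambda_{\overline e}=-\lambda_e$ makes $S^{T}=-S$, the weight symmetry $W_{\overline e}=W_e$ makes $S\mathcal{W}=\mathcal{W}S$, and the gauge relation is equivalent to $\mathcal{K}^{T}S=S\mathcal{K}$.

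Next I would assemble the Pfaffian. Adjoining formal square roots $w_e=\sqrt{W_e}$ with $w_{\overline e}=w_e$ and putting $D:=\mathrm{diag}(w_e)$, consider $A:=S(1-D\mathcal{K}D)$. Using $\mathcal{K}^{T}S=S\mathcal{K}$ and $DS=SD$ one finds $A^{T}=-A$, while $\det A=\det S\cdot\det(1-D\mathcal{K}D)=\det S\cdot\det(1-\mathcal{KW})$; hence $\det(1-\mathcal{KW})=\Pf(A)^{2}/\det S$. A priori $\Pf(A)$ lies in $\C[w_e]$, but $\Pf(A)^{2}=\det S\cdot\det(1-\mathcal{KW})\in\C[W_e]$ has no half-integer powers and $\Pf(A)\big|_{W=0}=\Pf(S)\neq0$; comparing powers of a single $w_e$ rules out the odd exponents, so $\Pf(A)\in\C[W_e]$. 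Finally, each edge variable $W_{\{x,y\}}$ occurs only in the two columns of $1-\mathcal{KW}$ indexed by the orientations of $\{x,y\}$, so $\det(1-\mathcal{KW})$ has degree at most two in it; therefore $\Pf(A)$ has degree at most one in each $W_e$, i.e.\ it is multilinear, and $\det(1-\mathcal{KW})=Q^{2}$ with $Q:=\Pf(A)/\sqrt{\det S}$ a multilinear polynomial.

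I expect the gauge construction to be the main obstacle: one must secure enough connectivity of the transition graph for $r$ to be an honest coboundary and, relatedly, check that $\mathcal{K}_{e,e'}=0$ forces $\mathcal{K}_{\overline{e'},\overline e}=0$, so that the ratios and the intertwining are meaningful on the full support of $\mathcal{K}$. The loop-pairing picture furnished by the zeta function of Theorem~\ref{thm:Ihara} (primitive loops pairing with their time reversals, no non-backtracking loop being its own reversal) explains heuristically why a square should emerge, but it cannot by itself produce a \emph{polynomial} square root; that polynomiality is exactly what the antisymmetric intertwiner $S$, built from the two symmetries, supplies.
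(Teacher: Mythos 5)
Your strategy is sound and is genuinely different from the paper's. The paper argues locally and analytically: since $\partial \mathcal M/\partial W_e=\mathcal K P^{(e)}$ with $P^{(e)}$ the rank-two projection onto $\mathrm{span}\{\delta_e,\delta_{\overline e}\}$, the second derivative $\partial^2_{W_e}\sqrt{\det(1-\mathcal M)}$ is proportional to the discriminant of the $2\times2$ matrix $A=P^{(e)}(1-\mathcal M)^{-1}\mathcal K P^{(e)}$; the path expansion of $A$ shows its two diagonal entries coincide (loopwise time-reversal invariance) and its off-diagonal entries vanish (twist anti-symmetry pairs each $e\to\overline e$ path with its twist), so $A$ is a multiple of the identity, the discriminant is zero, and $\sqrt{\det(1-\mathcal M)}$ is affine in each $W_e$ separately for small weights, whence multilinear by analytic continuation. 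You instead globalize the two symmetries into an antisymmetric intertwiner $S$ with $\mathcal K^{T}S=S\mathcal K$ and exhibit the square root as a Pfaffian; the linear algebra downstream of that (antisymmetry of $S(1-D\mathcal K D)$, the parity argument removing half-integer powers, the column-degree count) is correct. Your route buys an explicit algebraic witness for the square and makes contact with the Kasteleyn--Hurst--Green Pfaffian picture; the paper's route avoids any global construction and needs only the resolvent expansion at small weights.

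The one genuine gap is the step you yourself flag: the passage from the two path-level symmetries to the gauge $\Lambda$. The hypotheses constrain $\chi_{\mathcal K}$ only on loops and on $e\to\overline e$ paths, so they say nothing about entries $\mathcal K_{e,e'}$ for transitions lying on no loop; for such a transition one can have $\mathcal K_{e,e'}\neq 0=\mathcal K_{\overline{e'},\overline e}$, in which case the ratio $r(e,e')$ is undefined and the entrywise intertwining relation $\lambda_{\overline{e'}}\mathcal K_{\overline{e'},e}=\lambda_{e}\mathcal K_{\overline e,e'}$ forces some $\lambda$ to vanish --- no invertible $S$ exists, although the lemma remains true because such entries do not affect $\det(1-\mathcal{KW})$ at all by \eqref{eq:zetaM}. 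So you need a preliminary reduction (replace $\mathcal K$ by its restriction to transitions carried by loops, invoking Theorem~\ref{thm:Ihara} to see the determinant is unchanged), and then a genuine holonomy argument: trivial holonomy of $r$ around directed cycles gives path-independence within each strongly connected piece of the transition digraph, while the twist condition supplies the consistency between a piece and its reversal and across the arcs joining distinct pieces. That argument is the real content of your proof and is currently only asserted; until it is written out, the paper's discriminant computation is the shorter and safer path to the same conclusion.
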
 

\begin{proof}  
Since $\det (1 - \mathcal{M}) $ with  $ \mathcal{M} = \mathcal {K W}  $ is a quadratic polynomial in each  $ W_e $, it suffices by analyticity to establish the result in the case that all $(W_e )$ are small, i.e., if
\be \label{eq:asssmallW}
 \|  \mathcal{M} \|_\infty \ \leq \ \max {\rm degree}(\G) \; \times \max_e | W_e |  \ < 1 \, .
\ee 
The dependence of the matrix $ \mathcal{M}   $ on the parameter $ W_e $ for a fixed  $e$ is linear with derivative 
\be 
D^{(e)} \ : = \  \frac{\partial \mathcal M}{\partial \mathcal W_e} \ =\  \mathcal K \, P^{(e)}
\ee 
with the rank-two matrix  
$ P^{(e)} := | \delta_{e} \rangle \langle  \delta_{e} | + |\delta_{\overline{e}} \rangle \langle \delta_{\overline{e}} | $, 
abbreviating the orthogonal projection onto the subspace spanned by the (delta) functions supported only on $ e$ and $\overline{e} $.
The assumption~\eqref{eq:asssmallW} allows us to apply standard formulas for the derivatives of the logarithm of a determinant: 
\begin{multline}\label{eq:diffdet}
\frac{\partial^2}{\partial \mathcal W_e^2} \sqrt{\det (1 - \mathcal{M}) } =   
 \frac{1}{2}   \sqrt{\det (1 - \mathcal{M}) } \\
\times  \left \{ \frac{1}{2}  \left [ \tr\left( \frac{1}{1 - \mathcal{M}} \mathcal{D}^{(e)} \right) \right ] ^2 - \tr\left(  \frac{1}{1 - \mathcal{M}} \mathcal{D}^{(e)} \frac{1}{1 - \mathcal{M}} \mathcal{D}^{(e)} \right) \right \} \, . 
 \end{multline}
Abbreviating    
\be\label{eq:repA}
A := P^{(e)} \frac{1}{1 - \mathcal{M}} \mathcal{D}^{(e)} P^{(e)} \ =\ 
 P^{(e)} \sum_{n=1}^\infty  \mathcal{M}^n \mathcal{K} P^{(e)}
\ee
and viewing it as a $ 2 \times 2 $ matrix,  the bracket 
$ \{ \dots \} $  in~\eqref{eq:diffdet} equals
\be 
\frac{1}{2}  (\tr A)^2 - \tr A^2 = - \frac{1}{2} \, {\rm Discriminant}(A) \, . 
\ee 
From the power series representation in~\eqref{eq:repA}  and properties~\eqref{eq:looptrinv} and~\eqref{eq:looptwist}, we conclude that $A$ is a multiple of the $2\times 2$ identity matrix.  Hence its   discriminant is zero. This implies  that $ \sqrt{\det (1 - \mathcal{M}) }$ is linear in each $ W_e$ as  claimed. 
\end{proof}

\subsection{Kac-Ward matrices}

A key role in  Kac' and Ward's~\cite{KW52} combinatorial solution of the Ising model is played by the $\widehat{\mathcal E}\times \widehat{\mathcal E} $ non-backtracking matrix $ \mathcal{K} $, which for planar graphs  of straight edges is defined as 
\be  \label{eq:defK}
 \mathcal{K}_{e',e} \ :=  \  \indfct[ e' \rhd e; e'\neq \overline{e}]  \,  \exp\left(\tfrac{i}{2} \angle(e,e')\right)   \, , 
\ee 
with  $ \angle(e,e') \in (-\pi, \pi)$ the 
difference in the argument of the tangent of $e'$ relative to the tangent of $e$ at the vertex at which the two edges meet.  
To increase the method's reach, it is convenient to consider also an extension of   the 
corresponding matrix beyond strict planarity, to  faithful projections of graphs to $\R^2$. which we define as follows. (An example is depicted in Fig.~\ref{fig_3D}.)

\begin{definition}  A faithful projection of a graph $\G$ (which need not be planar)  to $\R^2$ is a graph drawn in the plane such that 
\begin{enumerate} 
\item the projection is graph isomorphic to $\G$, in the sense that  pairs of projected edges meet at a common end point only if so do the corresponding edges in~$\G$,   
\item the projected edges are described   by piecewise differentiable simple curves  which except for their end points avoid the graph vertices. 
\end{enumerate} 
\end{definition}

\begin{figure}[h]
 \begin{center} 
\includegraphics[width=.4 \textwidth]{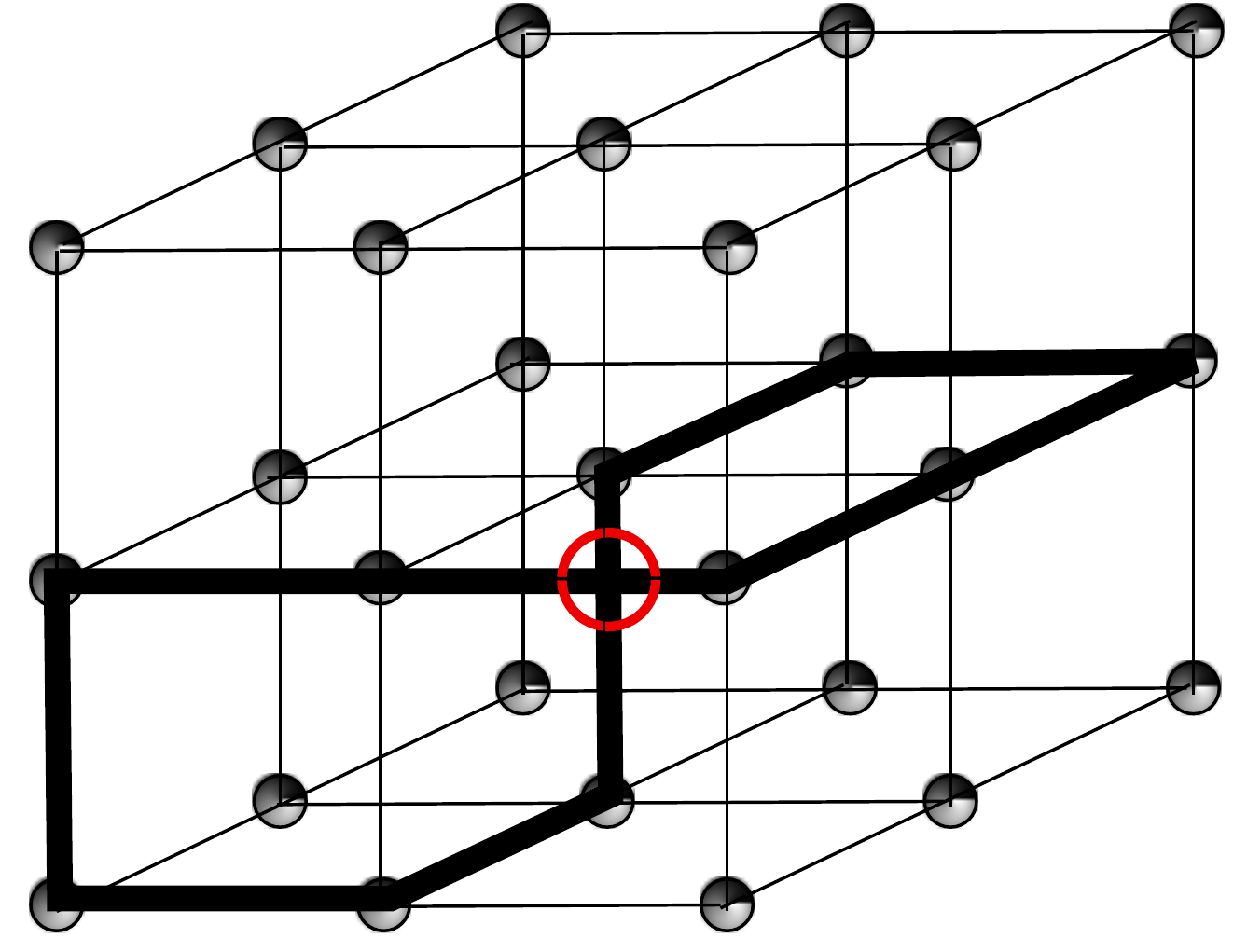}
\end{center}
\caption{A faithful projection on $\R^2$ of a non-planar graph.  Highlighted is a loop whose projection exhibits a non-vertex edge crossing, i.e., one which is not realized in the original graph.  
Such crossings are excluded in the planar projections which are discussed in Theorem~\ref{thm:KW}.  Their effects is presented in Theorem~\ref{thm:beyondP}.}  \label{fig_3D}
\end{figure}

We extend the definition of the matrix~\eqref{eq:defK} to such faithful projections    by 
letting   $ \angle(e',e) \in (-\pi, \pi]$ be  the sum of the discrete changes in  the tangent's argument, from the origin of $e$ to that of $e'$, plus the integral of the increase in the tangent's argument increments along the edge $e$.
Referring to planar graphs  we shall by default mean graphs faithfully projected to
$\R^2$ 
with no-crossing edges. \\

A key property of the Kac matrix $\mathcal {K}$ defined in~\eqref{eq:defK} is that  the product of the edge weights associated with $\mathcal {K}$ along a primitive oriented loop $ p $ imbedded into the plane $ \R^2 $ yields the parity of the winding $ {\rm win}(p) $ of the loop's tangent.   This in turn is related by Whitney's theorem~\cite{Whit}  to the 
parity of the loop's self crossing number $ n(p) $: 
\be \label{Whitney}
\chi_{_\mathcal{K }}(p)  \ =\  (-1)^{{\rm win}(p)}     \ = \  - (-1)^{n(p)} \, .
\ee 
(For planar loops the self crossing number's parity $ (-1)^{n(p) }$ is  well defined even for loops  which repeat a bond.  It is  the asymptotically common value of the corresponding quantity for all approximating sequences  in which $p$ is  approximated, in the natural distance, by smooth loops in $\R^2$ with only transversal self crossings.)

Consequently, $\mathcal {K}$ is not only non-backtracking, but also loopwise time-reversal invariant. Moreover, since any oriented path of the form
 $ \widehat \gamma =   (e , e_1 , \dots , e_{n-1} , \overline{e} ) $ can be closed to a loop in the plane (by connecting the endpoint of $  \overline{e} $ to the starting point of $ e $), 
the corresponding twisted path $ \widehat \gamma_T $ has an additional self-crossing and hence $\mathcal {K}$ is also twist anti-symmetric. 
Lemma~\ref{lem:sq} then allows the following shortcut in the zeta-function representation of the principal value of the square root of the 
determinant $  \det (1 - \mathcal{K W}) $ in terms of 
equivalence classes of \emph{unoriented loops}. For any primitive loop $ \gamma = (e_0, e_1, \dots , e_n = e_0 ) $ with $ e_j \in \mathcal{E} $, we associate
the weight
\be
w(\gamma) = \prod_{j=1}^n W_{e_j} \, . 
\ee

\begin{lemma} \label{lem:max_KW}
For any faithful projection of a finite graph $ \G $ to $ \R^2 $ and supposing~\eqref{eq:asssmallW}:
 \be \label{eq:max_KW}
\sqrt{ \det (1 - \mathcal{K W}) } \ =   \  1+ 
 \sum_{n=1}^\infty 
 \sum_{\substack{\gamma_1,\dots, \gamma_n\\ \|\{\gamma_1,..., \gamma_n\}\|_\infty   = 1}}
 \prod_{j=1}^n   
(-1)^{n(\gamma_j)} w(\gamma_j) \, . 
\ee 
The sum extends over equivalence classes of unoriented loops $ \gamma_1, \dots , \gamma_n $ on $ \mathcal{E} $ in which we restrict to the case that the maximal number of times an edge is covered by the given collection of loops is one, 
\be
\|\{\gamma_1,..., \gamma_n\}\|_\infty := \max_{e \in \mathcal{E}  } \sum_{j=1}^n \sharp [ e \in \gamma_j ] \,   ,
\ee
 with $ \sharp [ e \in \gamma_j ]  $ the number of times the edge $ e $ is traversed by the loop $ \gamma_j $.  
\end{lemma}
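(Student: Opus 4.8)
The plan is to start from the Bowen--Lanford product of Theorem~\ref{thm:Ihara}, convert it into a product over \emph{unoriented} loops by extracting a square root, and finally cut the resulting expansion down to edge-simple collections using the multilinearity supplied by Lemma~\ref{lem:sq}.

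First I would apply Theorem~\ref{thm:Ihara} to $\mathcal{M}=\mathcal{KW}$ with $u=1$, which is legitimate since assumption~\eqref{eq:asssmallW} forces $\|\mathcal{M}\|_\infty<1$. For a primitive oriented loop $p$ the transition weight factors as $\chi_{\mathcal{M}}(p)=\chi_{\mathcal{K}}(p)\,w(p)$, where $w(p)=\prod_{e\in p}W_e$ is orientation independent because $W_e=W_{\overline e}$. Inserting Whitney's identity~\eqref{Whitney}, $\chi_{\mathcal{K}}(p)=-(-1)^{n(p)}$, turns~\eqref{eq:zetaM} into
\[
\det(1-\mathcal{KW})=\prod_{p}\bigl(1+(-1)^{n(p)}w(p)\bigr),
\]
the product ranging over primitive oriented loops.

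Next I would extract the square root. Taking logarithms (valid for small weights) and grouping each primitive oriented loop with its time reversal $p_R$, loopwise time-reversal invariance~\eqref{eq:looptrinv}, together with $w(p)=w(p_R)$ and $n(p)=n(p_R)$, shows that the two members of a pair contribute equally, so that $\tfrac12\ln\det$ collapses to a single sum of $\ln(1+(-1)^{n(\gamma)}w(\gamma))$ over equivalence classes of unoriented primitive loops $\gamma$. Exponentiating, and fixing the branch by the requirement (guaranteed by Lemma~\ref{lem:sq}) that $\sqrt{\det(1-\mathcal{KW})}$ be the polynomial equal to $1$ at $W=0$, yields
\[
\sqrt{\det(1-\mathcal{KW})}=\prod_{\gamma}\bigl(1+(-1)^{n(\gamma)}w(\gamma)\bigr).
\]
Finally I would expand this absolutely convergent product into a sum over finite sets $\{\gamma_1,\dots,\gamma_n\}$ of distinct unoriented primitive loops, each set contributing $\prod_j(-1)^{n(\gamma_j)}w(\gamma_j)$. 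A term is multilinear in $(W_e)$ precisely when its collection covers every edge at most once, i.e.\ when $\|\{\gamma_1,\dots,\gamma_n\}\|_\infty=1$. Since Lemma~\ref{lem:sq} asserts the left-hand side is multilinear, while monomials carrying some $W_e$ to a power $\ge 2$ are linearly independent from the multilinear ones, the non-edge-simple contributions must cancel among themselves; the surviving edge-simple collections are exactly those displayed in~\eqref{eq:max_KW}.

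The delicate point is the square root, namely neutralizing any primitive loops with $p=p_R$, which would spoil the clean pairing and generate half-integer multiplicities. These cause no harm: $p=p_R$ forces both orientations $e,\overline e$ of every visited edge to occur, so such a loop covers each of its edges at least twice and contributes only non-multilinear monomials, which are annihilated in the final truncation by the very linear-independence argument above. Thus the self-reversed loops affect neither side of~\eqref{eq:max_KW}, and the multilinearity of Lemma~\ref{lem:sq} does all of the remaining combinatorial work.
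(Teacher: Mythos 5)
Your argument is correct and follows essentially the same route as the paper's proof: the Bowen--Lanford product of Theorem~\ref{thm:Ihara} combined with Whitney's identity, pairing each oriented primitive loop with its time reversal to extract the square root, and then invoking the multilinearity guaranteed by Lemma~\ref{lem:sq} (via uniqueness of the absolutely convergent expansion) to discard all loop collections with a doubly covered edge. The one point where you go beyond the paper---the worry about self-reversed primitive loops---is in fact vacuous, since a non-backtracking loop can never coincide with its reversal (equality would force either $e_j=\overline{e}_j$ or a backtracking step $e_{j+1}=\overline{e}_j$ somewhere), but your way of neutralizing them through the multilinear truncation is harmless.
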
 

\begin{proof}  
The zeta function relation~\eqref{eq:zetaM} applied to the 
the Kac-Ward matrix $ \mathcal{K}\mathcal{W} $ gives for $  |u| $ small enough
\begin{align}   \label{212}
\det (1 - u\, \mathcal{K W})  & = \    \prod_{p} \left[ 1 - u^{|p|} \, \chi_{\mathcal KW}(p)\ \right]  \\   
&  = \ 
 \prod_{p} \left[ 1 + u^{|p|} (-1)^{n(p) } w(p) \right]  
\ = \  
  \prod_{\gamma} 
  \left[ 1 + u^{|\gamma|} (-1)^{n(\gamma) }w(\gamma) \right]^2 \notag
\end{align}  
where we used: i) the relation
$
\chi_{_\mathcal{KW }}(p) \ = \ \chi_{_\mathcal{K }}(p)  \, w(p) $, ii)  the Whitney relation~\eqref{Whitney}, and iii)  
the fact   that both $(-1)^{n(p) } $ and  $w(p)$ are invariant under the loop's orientation reversal. 
Therefore, the contribution of every primitive oriented loop $ p $ equals the square of the corresponding primitive unoriented loop $ \gamma $.
Consequently,
\be \label{SQR}
\sqrt{ \det (1 - u \, \mathcal{K W}) } \  = \  
  \prod_{\gamma} 
  \left[ 1 + u^{|\gamma|} \, (-1)^{n(\gamma) }  \chi_{_\mathcal W}(\gamma) \right] 
\ee 
where $\gamma$ ranges over equivalence classes of \emph{unoriented} non-backtracking  loops.     
 The assumption~\eqref{eq:asssmallW} also allows to expand the product on the right side into 
\begin{align}\label{eq:expandsum}
\prod_{\gamma } \left[ 1+ u^{|\gamma|} (-1)^{n(\gamma)} \, w(\gamma) \right] & = 1 + \sum_{n=1}^\infty \sum_{\gamma_1,\dots, \gamma_n}  \prod_{j=1}^n   u^{|\gamma_j|} 
 (-1)^{n(\gamma_j)} \, w(\gamma_j) \,. 
\end{align}
Grouping the terms in the resulting sum according to  
$
\|\{\gamma_1,..., \gamma_n\}\|_\infty 
$ we  reach a key step in the argument: By  Lemma~\ref{lem:sq}, 
the left side of \eqref{SQR} is a multilinear function of the edge weights $(W_e)$.   By the uniqueness of power series expansion in the regime of its absolute convergence, one may deduce that the sum over terms on the right side of \eqref{SQR}    with 
 $ \|\{\gamma_1,..., \gamma_n\}\|_\infty   >  1$ vanishes.  
One is therefore left with only the terms with no doubly covered edges. 
Since there are only finitely many terms, the relation extends by analyticity to all $ u $, with $ u = 1 $ corresponding to the case claimed in~\eqref{eq:max_KW}. 
 \end{proof}

It should be noted that the last result  represents a major reduction of combinatorial complexity: the  sum in \eqref{eq:max_KW} is not only finite but it ranges over only edge disjoint collections of loops.

\section{The Kac-Ward solution}

 We now turn to the method proposed by Kac and Ward~\cite{KW52}  for a combinatorial solution of Ising model on planar graphs $ \G = (\mathcal{E}, \mathcal{V} ) $. 
 In discussing the model's partition function it is convenient to split off a trivial factor and denote: 
\be 
\widetilde Z_\G(\beta,h)  \  := \   Z_\G(\beta,h)  \  \ {\big /}  \left[ 2^{|\mathcal{V}|}  \prod_{\{x,y\}\subset  \mathcal{E}} \cosh (\beta J_{x,y} )\right ]   \,  .
\ee
The transition from  the determinantal expression  \eqref{eq:KW} to  Onsager's explicit formula of the partition function \eqref{eq:Onsager} for $ \mathbb{Z}^2 $ is through a standard calculation.  For completeness, it is included here in the Appendix.   

\begin{theorem}[Kac-Ward-Feynman-Sherman]     \label{thm:KW}  For any finite planar  graph $\G$:
\be    \label{eq:KW}
\widetilde   Z_\G(\beta, h=0) 
     \, = \,       \sqrt{\det  (1 - \mathcal{KW})}    
\ee 
with  $\mathcal W$ the $ \widehat{ \mathcal{E} } \times \widehat{ \mathcal{E} } $ weight matrix  with diagonal entries given by 
\be  W_{(x,y)} = W_{(y,x)}  =   \tanh(\beta J_{x,y}) \, .
\ee 
The right hand side involves the principal value of the square root function. 
\end{theorem}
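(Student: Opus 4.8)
The plan is to identify the two sides of~\eqref{eq:KW} as one and the same multilinear polynomial in the edge weights $(W_e)$. On the left I would apply the standard high-temperature expansion: using $e^{\beta J_{x,y}\sigma_x\sigma_y}=\cosh(\beta J_{x,y})\bigl(1+\sigma_x\sigma_y\tanh(\beta J_{x,y})\bigr)$ and summing over the spins, only edge subsets meeting every vertex an even number of times survive, whence
\[
\widetilde Z_\G(\beta,0)=\sum_{S\ \mathrm{even}}\ \prod_{e\in S}W_e,\qquad W_e=\tanh(\beta J_e).
\]
On the right, Lemma~\ref{lem:max_KW} already supplies the loop representation \eqref{eq:max_KW} of $\sqrt{\det(1-\mathcal{K}\mathcal{W})}$ as a sum over edge-disjoint collections of unoriented loops, each collection contributing $\prod_j(-1)^{n(\gamma_j)}w(\gamma_j)$. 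Because a collection covering exactly the edge set $S$ contributes $\prod_{e\in S}W_e$ times its sign, comparing the coefficient of each monomial $\prod_{e\in S}W_e$ reduces the theorem to the combinatorial claim that, for every even $S$,
\[
\sum_{\text{loop partitions of }S}\ \prod_j(-1)^{n(\gamma_j)}=1,
\]
the sum ranging over all ways of partitioning the edges of $S$ into edge-disjoint non-backtracking loops; for non-even $S$ there are no such partitions, matching the vanishing on the left.

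This identity is the crux, and I would prove it by localizing the sign at the vertices. A loop partition of $S$ carries exactly the same information as a choice, at every vertex $v$, of a perfect matching $M_v$ of the germs of $S$ at $v$ --- pairing the edge on which each passage enters with the edge on which it leaves; on a simple graph every such matching is automatically non-backtracking. All self-crossings occur at vertices, two passages crossing precisely when their chords interleave in the cyclic order fixed by the planar embedding. The key geometric observation is that two \emph{distinct} loops, viewed as closed immersed curves in $\R^2$, meet in an even number of transversal points, since the mod-two intersection number of null-homotopic curves in the plane vanishes. Hence mutual crossings contribute no sign, and writing $\mathrm{cr}(M_v)$ for the number of interleaving chord-pairs of $M_v$ one gets
\[
\prod_j(-1)^{n(\gamma_j)}=(-1)^{\sum_v \mathrm{cr}(M_v)}=\prod_v(-1)^{\mathrm{cr}(M_v)},
\]
so that the sum over loop partitions factorizes into an independent sum over the matching at each vertex.

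It then suffices to establish the \emph{local} identity
\[
\sum_{M}(-1)^{\mathrm{cr}(M)}=1,
\]
the sum over perfect matchings of any even number of points on a circle; this is precisely the Pfaffian content of Kac--Ward's first step. Taking the antisymmetric matrix $A$ with $A_{ij}=\sgn(j-i)$ one has $\Pf(A)=\sum_M(-1)^{\mathrm{cr}(M)}$, because the signed weight a matching carries in the Pfaffian is exactly the parity of its crossing number. A row expansion then gives $\Pf(A)=\sum_{j\ge 2}(-1)^j\,\Pf(A_{\hat 1\hat j})$, where each deletion minor $A_{\hat 1\hat j}$ is again a matrix of the same form; by induction $\Pf(A_{\hat 1\hat j})=1$, and $\sum_{j\ge 2}(-1)^j=1$. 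Feeding this back, every vertex contributes a factor $1$, the signed sum over loop partitions of each even $S$ equals $1$, the two multilinear polynomials agree monomial by monomial, and~\eqref{eq:KW} follows.

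The step I expect to be the genuine obstacle, and the one deserving most care, is the reduction of the loopwise self-crossing signs produced by the zeta-function product to the vertex-factorized form $\prod_v(-1)^{\mathrm{cr}(M_v)}$: this rests on the evenness of mutual crossings and on the fact that, for an embedding with no edge crossings, every crossing is localized at a vertex and recorded by the matching there. Some additional bookkeeping is needed for vertices of $S$ of degree zero (the empty matching, contributing $1$) and, if $\G$ is not simple, for self-loops and multiple edges, where the non-backtracking constraint must be imposed on the admissible matchings by hand.
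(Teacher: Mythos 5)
Your proposal is correct and follows essentially the same route as the paper: high-temperature expansion on the left, Lemma~\ref{lem:max_KW} on the right, reduction to the vertex-local identity of Lemma~\ref{lem:KW_step1} via the evenness of mutual crossings of distinct planar loops, and the local crossing-parity sum evaluated by the alternating-sign induction (your Pfaffian row expansion of $A_{ij}=\sgn(j-i)$ is exactly the induction of Lemma~\ref{lem:comb}, a reformulation the paper itself notes). No gaps.
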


Theorem~\ref{thm:KW} has a somewhat long and often restated history.  The formula  was presented in the  proposal by Kac and Ward~\cite{KW52} for a combinatorial solution of the Ising model which does not require the algebraic apparatus of Onsager~\cite{Ons_1} and Kaufman~\cite{Kauf49}.   Subsequently, gaps of formulation were pointed out by Feynman,  who  proposed a systematic approach to an order by order derivation of this relation, in steps of increasing complexity~\cite{Feyn}.  The full proof was accomplished by Sherman~\cite{She60} and later also by Burgoyne~\cite{Bou63}, who organized the combinatorial analysis to all orders.  At the same time the determinantal formula was given other derivations by Hurst and Green~\cite{HG_60}, Kasteleyn~\cite{Kas63} and Fisher~\cite{Fisher_Ising}  based on Pfaffians and relations to the dimer model. 
Proofs were extended to arbitrary planar graphs on orientable surfaces in which the connection to spin structures and graph zeta functions was already made;
see also\ \cite{Cim10,Lis15} and references therein.

In the remainder of this section we give a short proof of Theorem~\ref{thm:KW} based on the results on graph zeta functions for Kac-Ward matrices. As usual, this derivation
starts from the following graphical high temperature representation.

\subsection{A graphical high temperature representation}

For any collection of edges $\Gamma \subset  \mathcal{E} $, we denote by $\partial \Gamma$ 
  the collection of vertices of the graph which are oddly covered by $\Gamma$.   Edge collections with $\partial \Gamma = \emptyset$  correspond to so called \emph{even subgraphs} of $\G$.  
In these terms one has: 

\begin{lemma} \label{lem:evenGraph} For any finite graph $ \G $ the Ising partition function at vanishing field admits  the following expansion into even subgraphs: 
\begin{eqnarray}  \label{Z_graph}
\widetilde Z_\G(\beta,h=0)  \ = \ 
  \sum_{\Gamma \subset \mathcal E\, : \, \partial \Gamma = \emptyset } w(\Gamma) \,    
\end{eqnarray} 
with   \, 
$w(\Gamma) := \prod_{\{x,y\} \in \Gamma }   W_{\{x,y\} }$ 
at the weights 
\be \label{eq:weights}
W_{\{x,y\} } \ = \   \tanh(\beta J_{x,y})  \,.  
 \ee
 The corresponding spin correlation functions of any even number of vertices $ A \subset \mathcal{V} $ are represented by
 \be\label{eq:spincorr}
  \langle \prod_{x \in A} \sigma_x \rangle_{\beta, h=0} \;  \widetilde Z_\G(\beta,h=0) \, 
 \ = \  \sum_ {\Gamma \, : \, \partial \Gamma = A } 
   w(\Gamma) \, . 
\ee
 \end{lemma}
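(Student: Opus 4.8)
The plan is to derive both identities from the classical high-temperature expansion, whose only non-trivial ingredients are a single-edge algebraic identity and the orthogonality of the characters $\sigma_v \mapsto \sigma_v^{k}$ on $\{\pm 1\}$. First I would rewrite the Boltzmann weight edge by edge: since $\sigma_x\sigma_y \in \{\pm 1\}$,
\[
e^{\beta J_{x,y}\sigma_x\sigma_y} \= \cosh(\beta J_{x,y})\,\bigl(1 + \sigma_x\sigma_y\,\tanh(\beta J_{x,y})\bigr),
\]
which is verified on the two values $\sigma_x\sigma_y = \pm 1$. Factoring $\prod_{\{x,y\}\in\mathcal E}\cosh(\beta J_{x,y})$ out of $Z_\G(\beta,0) = \sum_\sigma \prod_{\{x,y\}} e^{\beta J_{x,y}\sigma_x\sigma_y}$ and expanding the resulting product of binomials term by term, I obtain a sum over all edge subsets $\Gamma\subset\mathcal E$, each carrying the weight $w(\Gamma) = \prod_{\{x,y\}\in\Gamma}W_{\{x,y\}}$ with $W_{\{x,y\}} = \tanh(\beta J_{x,y})$, times the monomial $\prod_{\{x,y\}\in\Gamma}\sigma_x\sigma_y$.

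The second step is the spin sum. Collecting the spins vertex by vertex, $\prod_{\{x,y\}\in\Gamma}\sigma_x\sigma_y = \prod_{v\in\mathcal V}\sigma_v^{d_\Gamma(v)}$, where $d_\Gamma(v)$ is the number of edges of $\Gamma$ incident to $v$. Since $\sum_{\sigma_v=\pm1}\sigma_v^{k}$ equals $2$ for even $k$ and $0$ for odd $k$, the sum over $\sigma$ factorizes and vanishes unless every $d_\Gamma(v)$ is even, i.e. unless $\partial\Gamma = \emptyset$, in which case it contributes $2^{|\mathcal V|}$. Hence $Z_\G(\beta,0) = 2^{|\mathcal V|}\bigl(\prod_{\{x,y\}\in\mathcal E}\cosh(\beta J_{x,y})\bigr)\sum_{\Gamma:\,\partial\Gamma=\emptyset}w(\Gamma)$, and dividing by the normalizing factor defining $\widetilde Z_\G$ gives the first claim.

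For the correlation identity I would repeat the computation with the observable $\prod_{x\in A}\sigma_x$ inserted into the spin sum. This raises each vertex exponent by $\indfct[v\in A]$, so the vertex sum now survives exactly when $d_\Gamma(v) + \indfct[v\in A]$ is even for every $v$, that is when the oddly covered vertices are precisely those of $A$, i.e. $\partial\Gamma = A$. Dividing by $Z_\G$ as before reproduces the stated formula, consistent with the observation that the sum on the right is empty (and the correlator therefore vanishes) unless $|A|$ is even.

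I do not anticipate a genuine obstacle: the argument is self-contained and its entire content is the two elementary facts above. The only points requiring care are the bookkeeping of the trivial prefactor $2^{|\mathcal V|}\prod_{\{x,y\}}\cosh(\beta J_{x,y})$ that defines $\widetilde Z_\G$, and the identification of the degree-parity constraint with the boundary condition $\partial\Gamma = \emptyset$ (resp.\ $\partial\Gamma = A$). Notably, nothing here uses planarity or any embedding, which matches the statement holding for an arbitrary finite graph $\G$.
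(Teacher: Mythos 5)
Your proposal is correct and follows essentially the same route as the paper: the single-edge identity $e^{\beta J_{x,y}\sigma_x\sigma_y}=\cosh(\beta J_{x,y})(1+\sigma_x\sigma_y\tanh(\beta J_{x,y}))$, expansion of the product over edges, and the spin sum $\sum_\sigma\prod_{\{x,y\}\in\Gamma}\sigma_x\sigma_y=2^{|\mathcal{V}|}\,\indfct[\partial\Gamma=\emptyset]$, with the insertion of $\prod_{x\in A}\sigma_x$ shifting the parity constraint to $\partial\Gamma=A$. Your treatment of the vertex-degree parity is, if anything, spelled out slightly more explicitly than in the paper, which leaves the second identity to ``similar arguments.''
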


This well known representation is obtained by writing
$$ 
e^{\beta J_{x,y} \sigma_x \sigma_y} 
=  \cosh( \beta J_{x,y} )  \left( 1 +  \sigma_x \sigma_y \tanh( \beta J_{x,y}) \right) ,
$$
expanding 
$$
\prod_{\{x,y\} \subset  \mathcal{E}}  \left( 1 +  \sigma_x \sigma_y \tanh \beta J_{x,y} \right) = \sum_{\Gamma } \prod_{\{x,y\} \in \Gamma}  \left [ \sigma_x \sigma_y \tanh( \beta J_{x,y}) \right   ] \, , 
$$ 
and then summing the resulting expression over the spin configurations $(\sigma)$. Equation~\eqref{Z_graph} 
results from the relation  $ \sum_\sigma \prod_{\{x,y\} \in \Gamma}  \sigma_x \sigma_y = 2^{|\mathcal{V}|} \, 1[\partial \Gamma = \emptyset] $ with $ 1[\dots] $ denoting the indicator function.
The second identity~\eqref{eq:spincorr} follows by similar arguments; see also~\cite{G06,DCbook}.

\subsection{A short proof of the Kac-Ward-Feynman-Sherman theorem} \label{sec:short_proof} 

The Ising model's partition function  is linked to the sum in \eqref{eq:max_KW} through the  following combinatorial identity
which, in the context of graphs of degree $4$, 
was among the insightful observations in  the Kac-Ward seminal paper.  
 
 \begin{lemma} [\cite{KW52}] \label{lem:KW_step1} 
For any  finite planar graph $ \G $  and any even subgraphs $\Gamma \subset  \mathcal{E} $ ($\partial \Gamma= \emptyset $)
\be   \label{eq:KW_step1}
  \sum_{ \Gamma = \sqcup \{\gamma_j\}   } 
  \prod_j (-1)^{n(\gamma_j)}
   = \  
 1 \, , 
\ee
where the sum is over all decompositions of $\Gamma = \sqcup \{\gamma_j\} $ into unoriented primitive loops.
\end{lemma}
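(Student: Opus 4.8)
The plan is to collapse the global signed sum over loop decompositions into a product of purely local, vertex-by-vertex contributions, each of which is a classical signed chord-diagram count equal to one. The whole scheme hinges on showing that, modulo $2$, the sign $\prod_j(-1)^{n(\gamma_j)}$ is a product over vertices.

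First I would set up the bijection between decompositions and local pairings. Since $\Gamma$ is an even subgraph in which each edge occurs once, at every vertex $v$ the edges of $\Gamma$ meeting $v$ present an even number $2d_v$ of ends, and any partition of $\Gamma$ into edge-disjoint non-backtracking loops amounts to choosing, independently at each $v$, a pairing $\pi_v$ of these ends into $d_v$ transits (in--out passages); backtracking cannot occur since a transit joins two distinct edges. Conversely every collection of local pairings $P=(\pi_v)_v$ assembles uniquely into a family of closed loops $\{\gamma_j\}$, which are automatically primitive because each edge is used exactly once. Hence the sum over decompositions is exactly the sum over all pairing schemes $P$.

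Next I would localize the sign. Drawing each transit as a small transversal arc in a disk about its vertex (the faithful planar drawing has no edge crossings away from vertices), every transversal crossing of the loop system sits at a vertex and corresponds to a pair of transits whose endpoints interleave in the cyclic order around that vertex. Writing $c_v(P)$ for the number of interleaving transit-pairs at $v$ and splitting such pairs into those internal to one loop and those joining two distinct loops, one obtains
\[
\sum_v c_v(P)\;=\;\sum_j n(\gamma_j)\;+\;X(P),
\]
where $n(\gamma_j)$ is the Whitney self-crossing number of $\eqref{Whitney}$ and $X(P)$ counts crossings between distinct loops. The crucial geometric input is the Jordan curve theorem: two distinct closed curves in the plane meet transversally an even number of times, so $X(P)$ is even and therefore
\[
\prod_j(-1)^{n(\gamma_j)}\;=\;(-1)^{\sum_v c_v(P)}\;=\;\prod_v(-1)^{c_v(\pi_v)}.
\]
This is the decisive step: the inter-loop crossings, which a priori obstruct locality, disappear modulo $2$, and the surviving sign factorizes over vertices. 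Because the pairing schemes range independently over the vertices, the sum then splits as
\[
\sum_{\Gamma=\sqcup\{\gamma_j\}}\prod_j(-1)^{n(\gamma_j)}
\;=\;\prod_v\Big[\,\sum_{\pi_v}(-1)^{c_v(\pi_v)}\,\Big].
\]

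It then remains to evaluate each local factor: the sum over all perfect matchings of $2d_v$ cyclically ordered points, weighted by $(-1)^{\#\text{crossings}}$. Cutting the circle turns cyclic interleavings into linear ones without changing their number, and the parity $(-1)^{\mathrm{cr}(M)}$ of a matching is exactly its Pfaffian sign, so each local factor equals $\Pf(\Omega)$ for the antisymmetric matrix $\Omega$ with all entries above the diagonal equal to $1$. A direct cofactor expansion gives $\Pf(\Omega)=1$ for every $d_v$ (the degree-$4$ case $1+1-1=1$ being precisely Kac and Ward's original observation). Hence every factor is $1$ and the product is $1$, as claimed. I expect the main obstacle to be the crossing bookkeeping of the middle step: making precise, for loops that revisit vertices and for the multi-loop system, that all transversal crossings are captured by cyclic interleavings at vertices, and then invoking the Jordan curve theorem to discard the inter-loop crossings; once the sign is shown to factorize, the local evaluation is a clean classical identity.
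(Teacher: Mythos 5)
Your proposal is correct and follows essentially the same route as the paper: decompositions correspond to independent pairings of edge-ends at each vertex, inter-loop crossings are discarded modulo $2$ because transversally intersecting closed planar curves cross an even number of times, and the resulting local signed pairing count at each vertex equals $1$. Your evaluation of the local factor as the Pfaffian of the all-ones antisymmetric matrix is just a repackaging of the paper's induction (Lemma~\ref{lem:comb}), whose expansion along the first line is exactly the Pfaffian cofactor expansion, as the paper itself remarks.
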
 
\begin{figure}[h]
 \begin{center} 
\includegraphics[width=.4\textwidth]{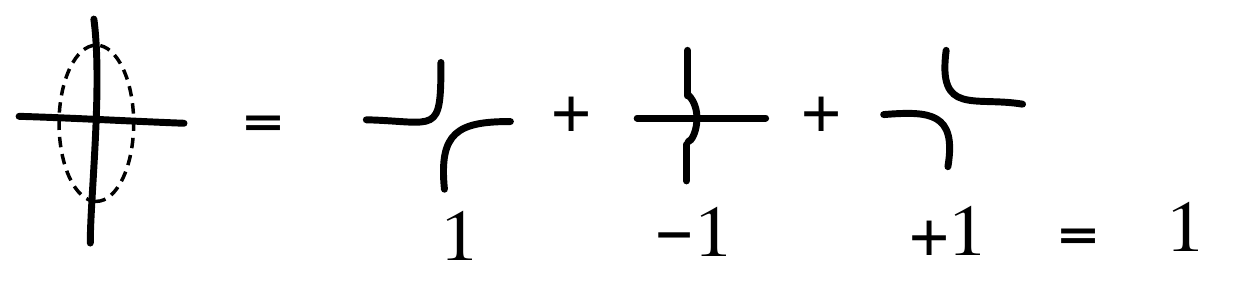}
\end{center}
\caption{The different  resolutions of a vertex of order $4$, which together determine the admissible loop decompositions of an even subgraph  $\Gamma$.  At each vertex the corresponding  $(\pm1)$ crossing phase factors add up to $1$.
}  \label{fig:4}
\end{figure} 
\begin{proof}   
  The   decompositions of $\Gamma$ into loops are in one-to-one correspondence with the different possible pairing of the incident edges at  the vertices of $\Gamma$, carried out independently at different vertices.

 Since pairs of loops in the plane with only transversal  intersections can cross only even number of times, for each of the resulting line pattern  the  total parity factor in \eqref{eq:KW_step1} coincides with the product over the vertices of the parity of the number of line crossings at the different sites, which for future use we denote
 \be
(-1)^{n(\gamma_1,..., \gamma_n)} \ := \   \prod_j (-1)^{n(\gamma_j)} \,.  
 \ee

For vertices of degree $4$, which is the only case of relevance for $\Z^2$, the sum is over three possible connection patterns.  As is depicted in figure Fig.~\ref{fig:4}, one of them with a single crossing and the two other with with no crossing.  The net sum of the corresponding factors is
$ 1-1+1  = 1 $.
This observation completes the proof  for the Ising model  on planar graphs degree bounded by $4$.   The more general case is covered by the lemma which follows. 
\end{proof} 

\begin{lemma}  \label{lem:comb}  For the collection of pairings of  an even collection of lines in the plane which  meet transversally at a vertex of degree $2n$, the sum over the parity of the number of line  crossings  is $1$.   
\end{lemma}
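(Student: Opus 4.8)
The plan is to strip the statement down to a purely combinatorial identity about chord diagrams. Place the $2n$ incident edge-ends as $2n$ points in their cyclic order on a small circle around the vertex; a pairing then becomes a perfect matching of these points by chords drawn through the vertex region. Exactly as in the proof of Lemma~\ref{lem:KW_step1}, two arcs meeting transversally cross an \emph{odd} number of times when their endpoints interleave around the circle and an even number otherwise, so modulo $2$ the crossing count depends only on the interleaving combinatorics. Writing $\mathrm{cr}(M)$ for the number of interleaving pairs of chords in a matching $M$, the quantity to evaluate is $f(n):=\sum_{M}(-1)^{\mathrm{cr}(M)}$, and the claim is that $f(n)=1$. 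I would prove this by induction on $n$, the base case $n=1$ being immediate: a single matching with no crossing gives $f(1)=1$.

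For the inductive step I would peel off the partner of a distinguished point. Label the points $1,\dots,2n$ in cyclic order and condition on the point $k$ matched to point $1$. The chord $\{1,k\}$ separates the remaining $2n-2$ points into an arc $I=\{2,\dots,k-1\}$ of size $k-2$ and its complement $O$. The decisive observation is that the parity of the number of chords crossing $\{1,k\}$ is \emph{independent} of how the other points are matched: a chord crosses $\{1,k\}$ exactly when it joins $I$ to $O$, and since chords internal to $I$ consume its points in pairs, the number of $I$-to-$O$ chords is congruent to $|I|=k-2\pmod 2$. Consequently the crossing parity factorizes as $(-1)^{\mathrm{cr}(M)}=(-1)^{k}\,(-1)^{\mathrm{cr}(M')}$, where $M'$ is the induced matching of the other $2n-2$ points.

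Because those remaining points retain their cyclic order, the crossings of $M'$ are precisely the crossings of a configuration of $2n-2$ cyclically ordered points, so summing over $M'$ for each fixed $k$ returns $f(n-1)$. Summing over the partner $k\in\{2,\dots,2n\}$ then gives $f(n)=\bigl(\sum_{k=2}^{2n}(-1)^{k}\bigr)\,f(n-1)$. The alternating sum ranges over the $2n-1$ (an odd number of) indices $k=2,\dots,2n$, begins and ends with $+1$, and hence equals $1$; with $f(n-1)=1$ by the inductive hypothesis this yields $f(n)=1$. Specializing to $n=2$ recovers exactly the tally $1-1+1=1$ of the three resolutions used for degree-$4$ vertices in Lemma~\ref{lem:KW_step1}.

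The single step that needs genuine care, as opposed to bookkeeping, is the parity-independence claim: that the crossing count of the distinguished chord with everything else is fixed modulo $2$ by the arc size alone. The reduction to a chord-diagram model (which reuses the same even/odd transversal-intersection fact already invoked for loops before Lemma~\ref{lem:KW_step1}) and the telescoping of the alternating sum are then routine. I would state the interleaving-determines-parity fact explicitly at the outset so that the induction can be carried out entirely at the level of matchings of cyclically ordered points.
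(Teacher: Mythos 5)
Your proof is correct and follows essentially the same route as the paper's: induction on $n$, conditioning on the partner $j$ (your $k$) of a distinguished line, noting that the crossing parity of that chord with the rest is deterministically $(-1)^j$, applying the induction hypothesis to the remaining $2n-2$ cyclically ordered ends, and evaluating the alternating sum $\sum_{j=2}^{2n}(-1)^j=1$. Your explicit justification of the parity-independence step (counting $I$-to-$O$ chords modulo $2$) spells out what the paper only asserts, but it is the same argument.
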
 
\begin{proof} The argument can be aided by a figure in which the intersection site is amplified into a disk, as depicted in Figure~\ref{fig:switch}.   

\begin{figure}[h]
 \begin{center} 
\includegraphics[width=.25\textwidth]{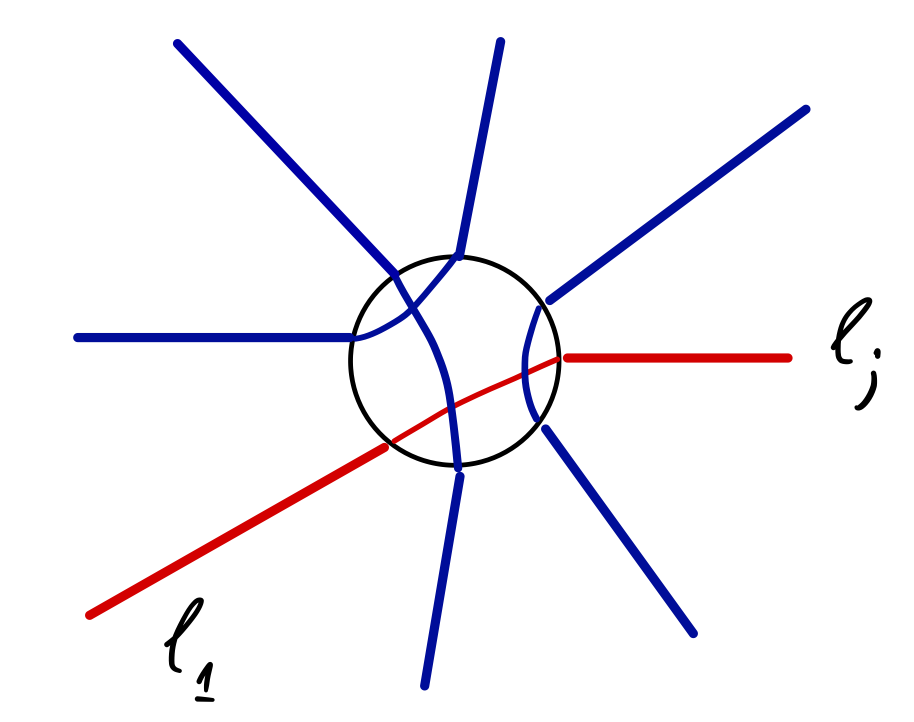}
\end{center}
\caption{An even number of lines meeting at a vertex, and one of their possible pairings, in this case of parity $(-1)^3 = -1$. }
\label{fig:switch}
\end{figure}

We prove by induction.  The case $n=1$ is trivially true.  

As the induction step, we assume the statement is true for $n$ and consider the case $n+1$.   
Let us classify the pairing configurations by  the index, $1<  j \le 2n$, of the line with which $\ell_1$ is paired.   
The line obtained by linking $\ell_1$ and $\ell_j$ splits the remaining incidental line segments  into two sets, of $(j-2, 2n - j)$ elements.   
We now note that the induction hypothesis implies that the correspondingly restricted sum of the intersection parities is $(-1)^j$.   
That is so since the parity of the other  lines' intersections with  the first one is deterministically $(-1)^j$, and the sum of the parities of the internal intersections of the rest is, by the induction hypothesis $+1$.   Therefore, the  sum in question is
\be 
\sum_{j=2}^{2n} (-1)^j \ = \ 1
\ee 
being an alternating sum of $\pm 1$ which starts and ends with $+1$.
Hence the statement exdends to $n+1$, and by induction to all $n\in \N$.   
\end{proof} 

It may be added that the line pairing can be indexed by   permutations $\pi \in S_{2n}$ such that  $\pi(2j-1) < \pi(2j)$ for  all $ j \in [1,...,n]$  
and $\pi(2j-1)$ is monotone increasing in $j$.  The line intersection parity coincides then with the permutation's parity.  The combinatorial statement of Lemma~\ref{lem:comb} is thus known in a number of forms.  \\

Lemma~\ref{lem:KW_step1} dealt with the leading collection of terms of  \eqref{eq:max_KW} corresponding to $\|\{\gamma_1,..., \gamma_n\}\|_\infty = 1 $.   The results stated before that provide effective  tools for dealing with the rest.

\begin{proof}[Proof of Theorem~\ref{thm:KW}]
Under the condition~\eqref{eq:asssmallW} on the weights $ W_e = \tanh(\beta J_e) $,  one may start from the expansion~\eqref{eq:max_KW} of $ \sqrt{ \det (1 - \mathcal{K W}) } $ into collection
of loops constrained  to have no double covered edges. 
It is natural to organize the terms of the sum in~\eqref{eq:max_KW}  according to the (disjoint) union  of edges 
$ \Gamma :=  
\sqcup_{j=1}^n  \gamma_j $.  One gets in this way:
\be \label{ad_halom}
\sqrt{ \det (1 - \mathcal{K W}) } \ 
  = \    \sum_{\Gamma \, : \, \partial \Gamma = \emptyset }   w(\Gamma) 
  \sum_{\substack{\mbox{\footnotesize loop decompositions} \\ 
   \Gamma = \sqcup \{\gamma_j\} }    } 
  (-1)^{n(\gamma_1,..., \gamma_n)} \, .
\ee
The remaining sum is the subject of  Lemma~\ref{lem:KW_step1}, which therefore implies the Kac-Ward formula.   
This proves the Kac-Ward-Feynman-Sherman theorem in case~\eqref{eq:asssmallW}. By analyticity of the determinant and the square of the (non-negative) partition 
function in the weights, the claimed relation~\eqref{eq:KW} extends to all values of $ (W_e) $  for finite graphs.
 \end{proof} 
 
 \section{The Kac-Ward formula beyond planarity} 
 
  Reviewing the proof of Theorem~\ref{thm:KW} one finds that the graph's planarity plays a role only in the last step. 
Presented below are two implications of this observation.  The first is a formula which was noted already by Sherman~\cite{She60}  and later rederived in~\cite{KLM13}.  
The second seems to be a new class of partly solvable non-planar interactions.\footnote{A  model is referred to as solvable if its partition function can be reduced to an expression of a significantly lower level of complexity than that of the partition function's defining expression.  That does not however mean that all simply stated questions have ready answers, cf. \cite{MW73}.}

 \subsection{The KW determinant on non-planar graphs}
 
The following extension of  Theorem~\ref{thm:KW} is valid for Ising spin systems with pair interactions on  graphs which need not be planar.   An example of the relevant setup is depicted in Fig.~\ref{fig_3D}.     The formula plays a role also in the analysis of strictly planar models, cf.~Section~\ref{subsec:ODpairs} and Appendix~\ref{sec:interp_per_bc}.  For clarity let us repeat that, by \eqref{Z_graph}, in the planar case the right side of~\eqref{eq:beyondP} yields the partition function.

 \begin{theorem}[Beyond planarity]     \label{thm:beyondP}  For any faithful projection of a finite graph $ \G $ (not necessarily planar) and a set of associated Ising spin-spin couplings   $\{J_{x,y}\}$  
\be    \label{eq:beyondP}
 \sqrt{\det   (1 - \mathcal{KW})}    \    
= \    \sum_{\Gamma \subset  \mathcal{E} \, : \, \partial \Gamma = \emptyset }  
  (-1)^{n_0(\Gamma)} \   w(\Gamma) 
   \ee  
 where $ n_0(\Gamma) $ is the number of $\Gamma$'s non-vertex edge crossings in $\G$'s   projection on $\R^2$. 
\end{theorem}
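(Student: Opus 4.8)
The plan is to repeat the proof of Theorem~\ref{thm:KW} up to its final combinatorial step and to account there for the extra crossings that a non-crossing-free projection introduces. The starting point is Lemma~\ref{lem:max_KW}. Its proof relies only on Lemma~\ref{lem:sq} together with the three structural properties of $\mathcal{K}$ --- non-backtracking, loopwise time-reversal invariance, and twist anti-symmetry --- and, as the discussion preceding Lemma~\ref{lem:max_KW} makes clear, these follow from the Whitney relation~\eqref{Whitney} for loops \emph{drawn in the plane}. They therefore hold for an arbitrary faithful projection, not only for crossing-free ones, so \eqref{eq:max_KW} is available verbatim: it expresses $\sqrt{\det(1-\mathcal{KW})}$ as a sum over edge-disjoint collections $\{\gamma_1,\dots,\gamma_n\}$ of unoriented loops with weight $\prod_{j=1}^n (-1)^{n(\gamma_j)} w(\gamma_j)$, where $n(\gamma_j)$ is the self-crossing number of $\gamma_j$ in the plane.

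Following \eqref{ad_halom}, I would reorganize this sum by the edge union $\Gamma := \sqcup_{j} \gamma_j$. Edge-disjointness forces every vertex to have even degree in $\Gamma$, hence $\partial\Gamma=\emptyset$, so the outer sum ranges over even subgraphs while the inner sum ranges over loop decompositions $\Gamma = \sqcup\{\gamma_j\}$ of a fixed $\Gamma$. The entire content of the theorem then resides in the non-planar analogue of Lemma~\ref{lem:KW_step1}: that the inner sum $\sum_{\Gamma = \sqcup\{\gamma_j\}} \prod_j (-1)^{n(\gamma_j)}$ equals $(-1)^{n_0(\Gamma)}$.

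To evaluate this inner sum I would first convert the product of self-crossing parities into the total crossing parity of the line pattern: since any two distinct closed planar curves meet transversally an even number of times, mutual crossings between different loops contribute trivially, and $\prod_j (-1)^{n(\gamma_j)}$ equals $(-1)$ raised to the total number of crossings of the pattern. Splitting these crossings into those at vertices of $\G$ and those at non-vertex points of the projection, the latter are exactly the ones counted by $n_0(\Gamma)$: by faithfulness the projected edges are simple curves avoiding the vertices, so each non-vertex crossing joins two \emph{distinct} edges of $\Gamma$ and is present or absent according to $\Gamma$ alone, independently of how the loops thread the vertices. The factor $(-1)^{n_0(\Gamma)}$ thus pulls out of every term, leaving a sum over decompositions of $(-1)$ to the number of vertex crossings. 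That residual sum is handled exactly as in the planar case: decompositions amount to independent pairing choices of the incident edges at each vertex, the vertex-crossing parity factorizes over vertices, and Lemma~\ref{lem:comb} contributes a factor $1$ at each vertex of degree $2m$. Hence the inner sum equals $(-1)^{n_0(\Gamma)}$, which gives \eqref{eq:beyondP} under the smallness hypothesis~\eqref{eq:asssmallW}; analyticity of both sides in the weights $(W_e)$ then removes that restriction as in the closing lines of the proof of Theorem~\ref{thm:KW}.

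I expect the main obstacle to be the clean separation carried out in the third step: one must argue carefully that $n_0(\Gamma)$ is an honest invariant of the edge set $\Gamma$ and not of the chosen decomposition, and that replacing $\prod_j (-1)^{n(\gamma_j)}$ by the total-crossing parity is legitimate even though the loops may touch repeatedly at vertices and cross one another in the plane. Both points rest on the even-intersection property of closed planar curves and on the simplicity of the individual projected edges; once these are stated precisely, the reduction to Lemma~\ref{lem:comb} is immediate and the remainder is routine.
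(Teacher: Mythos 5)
Your proposal is correct and follows essentially the same route as the paper: both start from Lemma~\ref{lem:max_KW} (valid for any faithful projection), reorganize the sum by the edge union $\Gamma$ as in \eqref{ad_halom}, observe that the non-vertex crossings are determined by $\Gamma$ alone so the factor $(-1)^{n_0(\Gamma)}$ factors out, and dispose of the vertex crossings via Lemma~\ref{lem:KW_step1}/Lemma~\ref{lem:comb}. The paper's proof is terser; your third step merely spells out the bookkeeping (even mutual-intersection parity of distinct closed curves, decomposition-independence of $n_0(\Gamma)$) that the paper leaves implicit.
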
 
 \begin{proof}
Starting from \eqref{ad_halom} it was shown that for each even subgraph $\Gamma$ the sum over loop decompositions of 
$\Gamma$ of the parity factors at each vertex adds up to $(+1)$.   
In the present context, this last statement still applies to each  vertex of the faithful projections, but not  to the non-vertex crossings.   However, at non-vertex crossings  there is no ambiguity in the local structure  of the loop decompositions.  The arguments used to conclude the proof of  Theorem~\ref{thm:KW} thus leaves one with exactly the factor $  (-1)^{n_0(\Gamma)}$, as claimed in 
\eqref{eq:beyondP}. 
\end{proof} 

\subsection{A class of partly solvable quasi-planar systems}

\begin{definition} 
In the following, we call a faithful projection of a graph on $\R^2$  \emph{quasi planar}  if each of its edges crosses at most one other edge and then at most once.  For any such projection we denote by $\mathcal E_P$ the corresponding (planar) collection of non-crossed edges.  
 \end{definition}

For convenience, in what follows we express pair interactions involving 
crossed pairs in terms of a duo of couplings $(J^+_\alpha,J^-_\alpha)$ which  correspond to a selection of a cyclic order of the endpoints of the crossed edges,   labeled as $(\alpha, j)$, $j=1,...,4$.   The distinction between $J^+$ and $J^-$ is arbitrary here, and  it is permitted for a site to be cited  in more than one $\alpha$, cf. Figure~\ref{fig_crossed}.

\begin{figure}[h]
 \begin{center} 
\includegraphics[width=.3 \textwidth]{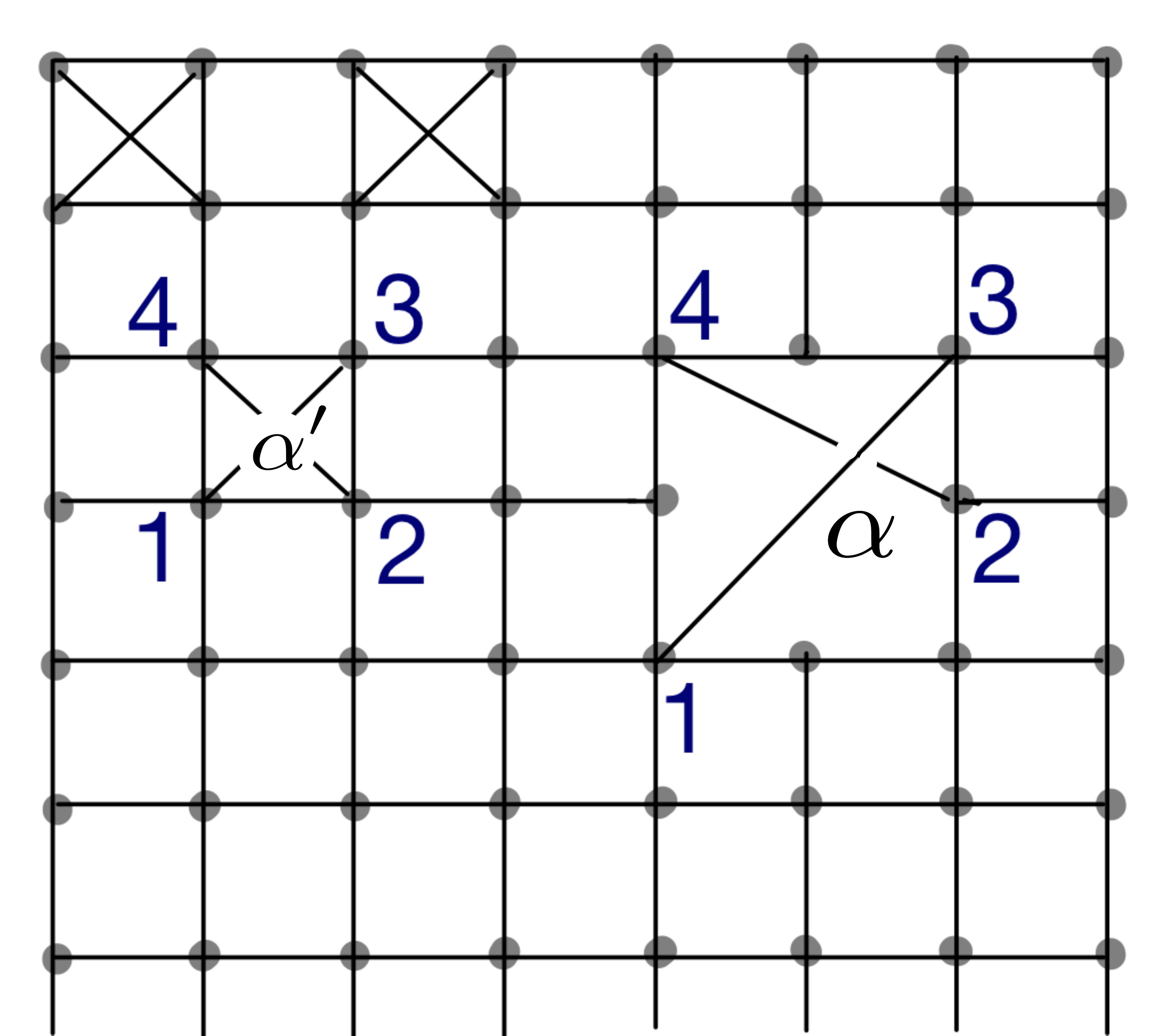}
\end{center}
\caption{A \emph{quasi planar}  graph on which the interaction of Theorem~\ref{thm_beyond_pl} is defined.  In this example there are four pairs of crossed edges, the labels of two of which are indicated explicitly.  The corresponding Kac-Ward determinant yields the partition function of such a systems with the two-body Hamiltonian modified by   four-body  ``corrector''  terms, one for each crossed pair (labeled by $\alpha$).}  
\label{fig_crossed}
\end{figure}

\begin{theorem} \label{thm_beyond_pl} Let $(\G, \mathcal E)$ be a finite graph with a faithful quasi planar projection, whose collection of internal intersection points is denoted by $S$ and labeled by $\alpha$, and a pair-interaction Hamiltonian presented as
\be 
H(\sigma) \= -  \sum_{(u,v)\in \mathcal E_P} J_{u,v} \sigma_u \sigma_v  
\ -\  \sum_{\alpha \in S}  \left[ J^+_\alpha \sigma_{\alpha,1} \, \sigma_{\alpha,3}  \ + \  J^-_\alpha \sigma_{\alpha,2} \, \sigma_{\alpha,4} 
\right] 
\ee  
where the first sum is over the uncrossed edges, and  the second  over pairs of  crossed edges, indexed by $\alpha$,  using the convention explained above. 

Then for each temperature $\beta^{-1}$ a determinantal solution exists for the temperature-dependent  Hamiltonian
\be
H^{corr}_\beta(\sigma) \ := \  H(\sigma) + V_\beta(\sigma) \, , 
\ee 
with the ``corrector'' term consisting of the four-spin  interaction 
\be
V_\beta(\sigma)  \=     \sum_{\alpha \in S}
\displaystyle 
R_{\alpha} \, \prod_{j=1}^4 \sigma_{\alpha,j} 
\ee 
at strengths $ R_{\alpha}\in \R$  given by the relation:
\be \label{R_solution}
\tanh(\beta R_\alpha)  \ = 
 \frac{Y_\alpha}{ 1+ \sqrt{1-Y_\alpha^{2}}}
\ee 
where 
\be \label{Y}
Y_\alpha \= \tanh(2\beta J^+_\alpha) \tanh(2 \beta J^-_\alpha)\,.
\ee
\end{theorem}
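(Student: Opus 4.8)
The plan is to read the determinantal solution off Theorem~\ref{thm:beyondP}, which already expresses $\sqrt{\det(1-\mathcal{KW})}$ as a signed sum over even subgraphs of the faithful projection, the sign being $(-1)^{n_0(\Gamma)}$ with $n_0(\Gamma)$ the number of non-vertex crossings realized by $\Gamma$. In the quasi-planar case a crossing at $\alpha$ is realized exactly when both crossed edges at $\alpha$ — call them $a_\alpha$ (joining $\sigma_{\alpha,1},\sigma_{\alpha,3}$) and $b_\alpha$ (joining $\sigma_{\alpha,2},\sigma_{\alpha,4}$) — lie in $\Gamma$, so that $(-1)^{n_0(\Gamma)}=\prod_\alpha(-1)^{\mathbf 1[a_\alpha,b_\alpha\in\Gamma]}$. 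First I would expand the corrected partition function by the high-temperature rule of Lemma~\ref{lem:evenGraph}, now applied also to the four-spin factor $e^{-\beta R_\alpha\prod_j\sigma_{\alpha,j}}=\cosh(\beta R_\alpha)\bigl(1-\tanh(\beta R_\alpha)\prod_j\sigma_{\alpha,j}\bigr)$. Summing over spins presents $\widetilde Z^{corr}_\beta$ as a sum over generalized even subgraphs built from the uncrossed edges, the crossed edges $a_\alpha,b_\alpha$, and an extra four-legged bond $c_\alpha$ touching all four vertices of $\alpha$, with weights $\tanh(\beta J)$ on two-body bonds and $-\tanh(\beta R_\alpha)$ on each $c_\alpha$.

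The key step is a local resummation at each crossing that is insensitive to whether its vertices are shared with other crossings. I would note that the triple $(a_\alpha,b_\alpha,c_\alpha)\in\{0,1\}^3$ affects the vertex parities only through one of four \emph{signatures} on $(\sigma_{\alpha,1},\dots,\sigma_{\alpha,4})$, namely $(0,0,0,0),(1,0,1,0),(0,1,0,1),(1,1,1,1)$, and that each signature is produced by exactly two triples, related by the simultaneous toggle $(a,b,c)\mapsto(1-a,1-b,1-c)$. Since the toggle leaves every vertex parity unchanged, it decouples from the global evenness constraint and may be summed out first. Writing $W^\pm_\alpha=\tanh(\beta J^\pm_\alpha)$ and $t_\alpha=\tanh(\beta R_\alpha)$, this toggle-sum assigns to the four signatures the local factors $L_\alpha=1-W^+W^-t,\ W^+-W^-t,\ W^--W^+t,\ W^+W^--t$ (indices $\alpha$ suppressed). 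On the Kac-Ward side the same four signatures arise bijectively from $(a_\alpha,b_\alpha)\in\{0,1\}^2$ and, with crossed-edge weights $\widehat W^\pm_\alpha$ and the crossing sign above, carry local factors $K_\alpha=1,\ \widehat W^+,\ \widehat W^-,\ -\widehat W^+\widehat W^-$. Because the evenness constraint and the uncrossed-edge weights are identical on the two sides, replacing each $K_\alpha(s_\alpha)$ by $L_\alpha(s_\alpha)$ multiplies the whole sum by a single scalar precisely when $L_\alpha=\mu_\alpha K_\alpha$ holds signature-by-signature for some $\mu_\alpha$.

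I would then solve these four scalar equations. The first three give $\mu_\alpha=1-W^+W^-t$ and $\widehat W^\pm_\alpha=(W^\pm-W^\mp t)/\mu_\alpha$, and substituting into the fourth yields the single consistency relation
\[
\frac{2\,W^+_\alpha W^-_\alpha}{\bigl(1+(W^+_\alpha)^2\bigr)\bigl(1+(W^-_\alpha)^2\bigr)}=\frac{t_\alpha}{1+t_\alpha^2}.
\]
Clearing the factor $\tfrac12$ and using the double-angle identities $2W/(1+W^2)=\tanh(2\beta J)$ and $2t/(1+t^2)=\tanh(2\beta R_\alpha)$, this is exactly $\tanh(2\beta R_\alpha)=\tanh(2\beta J^+_\alpha)\tanh(2\beta J^-_\alpha)=Y_\alpha$; inverting the double angle and keeping the root with $|t_\alpha|<1$ gives the stated value $\tanh(\beta R_\alpha)=Y_\alpha/(1+\sqrt{1-Y_\alpha^2})$. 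With $R_\alpha$ so chosen and $\widehat W^\pm_\alpha$ as above, Theorem~\ref{thm:beyondP} delivers $\widetilde Z^{corr}_\beta=\bigl(\prod_\alpha\mu_\alpha\bigr)\sqrt{\det(1-\mathcal K\widehat{\mathcal W})}$, so that, up to an explicit elementary prefactor, the corrected partition function is the Kac-Ward determinant. I expect the main obstacle to be the bookkeeping that makes the local resummation legitimate when a vertex belongs to several crossings; the signature/toggle decoupling is exactly what resolves it, since it confines all interaction between crossings to the unchanged parity constraint and lets the toggle sums factorize.
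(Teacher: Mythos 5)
Your proposal is correct and follows essentially the same route as the paper: a local high-temperature resummation at each crossing, matching the four coefficients $\bigl(1-W^+W^-t,\ W^+-W^-t,\ W^--W^+t,\ W^+W^--t\bigr)$ against $\mu_\alpha\bigl(1,\ \widehat W^+,\ \widehat W^-,\ -\widehat W^+\widehat W^-\bigr)$ from Theorem~\ref{thm:beyondP}, which yields the same renormalized weights, the same consistency condition $\tanh(2\beta R_\alpha)=Y_\alpha$ (equivalent to the paper's quadratic \eqref{R2}), and the same root selection. Your signature/toggle bookkeeping is just an explicit version of the factorization the paper performs when it rewrites each crossing's factor as $N_\alpha(1+A_\alpha\sigma_{\alpha,1}\sigma_{\alpha,3}+B_\alpha\sigma_{\alpha,2}\sigma_{\alpha,4}+C_\alpha\prod_j\sigma_{\alpha,j})$ and imposes $C_\alpha=-A_\alpha B_\alpha$.
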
 

An explicit expression for the partition function of the modified Hamiltonian is stated at the end of the proof.  
As is the case for the Onsager and Kac-Ward formulae, the statement is not restricted to ferromagnetic couplings.
The non-planar additions to the hamiltonian can be made  preserving translation invariance, or in a disordered fashion.   

\begin{proof}  In the following argument we  abbreviate by   $ \t X$  the $\beta $-dependent quantities 
\be \t X = \tanh (\beta X) \, , \quad \mbox{for  $X = J^\pm_\alpha,  R_\alpha$\,.} 
\ee 

For each $\beta$ (and $h=0$) the partition function of $H^{corr}_\beta$ is (at $h=0$)  
\be 
Z^{corr}(\beta)  \  := \    \widetilde  Z^{corr}(\beta)  \times  \left[ 2^{|\mathcal{V}|}  \prod_{\{x,y\}\subset  \mathcal{E}_P} \cosh (\beta J_{x,y} )\right ]   \, \prod_{\alpha \in S} 
\left[ \cosh(\beta J^+_\alpha) \cosh(\beta J^-_\alpha)  \cosh(\beta R_\alpha) \right] \,  , 
\ee
with 
\begin{multline} 
 \widetilde  Z^{corr}(\beta)  \= \frac{1}{ 2^{|\mathcal{V}|} } \sum_\sigma
 \prod_{\{x,y\}\subset  \mathcal{E}_P} \left[  1 + \t{J}_{x,y} \sigma_x \sigma_y
 \right]  \, \times \\ 
 \prod_\alpha \left[(1+ \t{J}^+_\alpha \sigma_{\alpha,1} \, \sigma_{\alpha,3} )  
(1+ \t{J}^-_\alpha \sigma_{\alpha,2} \, \sigma_{\alpha,4} )   
\Big(1- \t{R}_\alpha \prod_{j=1}^4\sigma_{\alpha,j} \Big)  
 \right] 
 \end{multline}  

Denoting $N_\alpha   = 1- \t{J}^+_\alpha \t{J}^-_\alpha \t{R} $, \, 
the factor corresponding to each $\alpha$ expands to:
\ba 
N_\alpha \, 
\left(1+ A_\alpha \, \sigma_{\alpha,1} \, \sigma_{\alpha,3}   
+ B_\alpha\,  \sigma_{\alpha,2} \, \sigma_{\alpha,4}     
+ C_\alpha \prod_{j=1}^4\sigma_{\alpha,j}  
 \right)  
\ea 
with
\be \label{corr_exp}
N_\alpha A_\alpha \= \   \t{J}^+_\alpha - \t{J}^-_\alpha \t{R}_\alpha  \,,   \quad  N_\alpha B_\alpha \=  \t{J}^-_\alpha - \t{J}^+_\alpha \t{R}_\alpha  \,,   \quad  
 \quad N_\alpha C_\alpha \= \   \t{J}^+_\alpha  \t{J}^-_\alpha - \t{R}_\alpha \,.  
\ee 
Our goal is to find $R_\alpha$ such that  
\be  \label{AHA}
C_\alpha \= - A_\alpha \cdot B_\alpha  \, .   
\ee
\mbox{}\\ 

The condition 
\eqref{AHA}  
yields a quadratic equation for $\t{R}_\alpha$ in terms of $\t{J}_\alpha^\pm$ which  through the identity 
 $(\t{J} +\t{J}^{-1})/2 = 1/\tanh(2\beta J)$ reduces to
\be \label{R2}
\t{R}_\alpha^2 - \frac{2}{Y_\alpha} \, \t{R}_\alpha + 1 \= 0  \,. 
\ee 
For any set of the pair coupling $(J^-_\alpha, J_\alpha^+)$ one has 
$|Y_\alpha| <1$ (cf. \eqref{Y}), in which case  \eqref{R2} 
has two real and unequal solutions whose   product is $1$.  
The solution satisfying  $|\t{R}_\alpha| <1$ (a necessary condition for $\t{R}$ to coincide with $\tanh(\beta R)$ at real $R$) is given by \eqref{R_solution}.
 
Under the key condition \eqref{AHA} (which reverses the sign with which $A_\alpha B_\alpha$ appears in the planar case)  the graphical expansion of  $ \t{Z}^{corr}(\beta)$ along the lines of Lemma~\ref{lem:evenGraph} coincides with the expression found on the right of \eqref{eq:beyondP} for the Kac-Ward determinant of a modified matrix -- multiplied by  $\prod_\alpha N_\alpha$.   The modified matrix $\mathcal{W}^{corr}$ is obtained from $\mathcal{W}$ which corresponds to the initial pair interaction in $H(\sigma)$ by changing the weights of  the crossed edges  so that for each $\alpha\in S$
\begin{eqnarray} 
\t{J}^+_\alpha  &\mbox{is replaced by} \quad
\left(\t{J}^+_\alpha -   \t{J}^-_\alpha R_\alpha\right) \big/  N_\alpha \notag \\   
 \t{J}^-_\alpha  &\mbox{is replaced by}  \quad 
\left( \t{J}^-_\alpha -   \t{J}^+_\alpha R_\alpha \right) \big/  N_\alpha
\end{eqnarray}  

The net result of the above considerations is 
\be
\widetilde  Z^{corr}(\beta) \=  \sqrt{\det \left(1 - \mathcal{K} \mathcal{W}^{corr} \right) }  \, \prod_{\alpha\in S} N_\alpha
\ee  
where $\mathcal{K}$ is the Kac-Ward matrix \eqref{eq:defK} corresponding to the graph's given planar projection (indexed by the set of the graph's oriented edges).  

The main statement can be rephrased by saying that   \eqref{AHA}  defines a surface in the model's parameter space (among which we include the temperature) along which the enhanced quasi planar model is solvable by the extended Kac-Ward formula.  
 \end{proof}

\section{Emergent fermionic structure} \label{sec:KadCev}

Onsager's solution was algebraic in nature.  Its different elaborations and simplification show 
the utility  in this context of a number of anti-commuting structures, which show in various forms.

 \begin{enumerate} 
 
 \item The spinor-based derivation of Onsager's solution by  B. Kaufman~\cite{Kauf49}
  \item The determinantal representation~\eqref{eq:KW} of the partition functions by Kac and Ward~\cite{KW52}  and later, in the context of a dimer partition function, by Kasteleyn~\cite{Kas63}.  
 
 \item The Schultz, Mattis and Lieb \cite{SML64} representation of the transfer matrix as the Hamiltonian of non-interaction fermions.
 
 \item The Kadanoff-Ceva~\cite{KC71}  fermionic spinors constructed as linked pairs of order-disorder operators, whose definition is restated below.  Their educated guess as to  the power laws which may  be associated with the resulting spinor structures yields  considerable insight on the values and structure   of the model's critical exponents. 

\item The Pfaffian structure of the boundary correlation functions of the model on planar graphs, which was pointed out by Groeneveld, Boel and Kasteleyn~\cite{GBK77}.   
 
 \item Smirnov's parafermionic amplitude~\cite{Smi10}, which  plays an important role in his derivation of the scaling limit of the nearest-neighbor Ising model on  $ \mathbb{Z}^2 $.   Lis~\cite{Lis14} and Dubedat~\cite{Dubedat} have linked the related kernel to the inverse of the Kac-Ward matrix.   
 The theory's extension to the broader context of planar graphs was offered in the work of Chelkak and  Smirnov \cite{CS12}, cf.~\cite{CCK16} for a recent review and analysis.        
\end{enumerate} 

Let us add that the last two fermionic structures are not identical.  
Smirnov's parafermionic amplitude plays a role  in  the broader class of $Q$ state Potts models.  In comparison  the more limited Pfaffian relations suggests the presence, in planar graphs, of  \emph{non-interacting} fermionic degrees of freedom.     
(A similar structure can be found in the theory of planar graphs' dimer covers~\cite{ALW}.)   
While the Pfaffian relation  of regular correlations does not extend to the bulk, it does find its extension in the $n$-point correlation functions of the order-disorder operators, as is shown next.

\subsection{Disorder operators} 

The  Kadanoff and Ceva~\cite{KC71}  \emph{disorder operators} are associated with vertex-avoiding  lines,  drawn over the graph's planar projection.  Associated to each line~$\ell$ is a transformation of the Hamiltonian, $H \mapsto R_\ell H$,   which flips the sign of the couplings $(J_{x,y})$ for edges $\{x,y\} \in \mathcal {E}$ which are crossed an odd number of times by~$\ell$.   

Extending the notion of correlation function from the mean values of spin products,   
$ \langle \prod_j \sigma_{x_j}   \rangle_{\beta, 0}^\G  $, to the effect of more general operations on functional integrals or sums, one may define: 
\be 
\langle \prod_j R_{\ell_j}   \rangle^{\G} \ : = \  
\frac{\sum_{\sigma} e^{-\beta (R_{\ell_n} \circ \dots \circ R_{\ell_1} H)(\sigma)  }} 
{  Z_{\G}( \beta,0) }  \, .
\ee  
where we subsequently assume that $ h =0 $.
Under the Kramers-Wannier duality, 
for any planar graph and collection of lines $\{\ell_j\}$
 \be  \label{disorder} 
\langle \prod_j R_{\ell_j}   \rangle^{\G} \ =  
\langle  \prod_j \prod_{y\in   \partial \ell_j  } \sigma_{y}    
\rangle^{\G^*} 
\ee
where the expectation value $ \langle \cdot \rangle^{\G^*} $  refers to an Ising system on the dual graph $ \G^* $, at the  dual  values of $K_e \equiv \beta J_e$.   
The duality  is the  $Q=2$ case of the $Q$ state Potts models  relation 
\be 
\left( e^{2K _e} -1  \right) \,  \left( e^{2K^* _{e^*}} -1  \right) \ = \  Q  
\,  ,  
\ee 
which in terms of the weights $W_e=\tanh K_e$ 
can be restated as: 
\be 
W^*_{e^*} \ = \  \frac{1-W_e }{1+W_e}  \, .
\ee

Long-range order in the dual model indicates (in the homogeneous case) that the original spin system is in its disordered phase.  This, and the relation \eqref{disorder} led to the proposal to call $R_\ell$ \emph{disorder operators}~\cite{KC71}.   Adapting a  convention for a canonical  choice of the line trajectory given its end points,  leads to a useful, though slightly deceptive, picture of the disorder variables as local operators, whose correlation functions coincide with the spin correlation functions of the dual model.   
\\ 
 
The disorder variables can also be viewed as  partial implementations of a gauge symmetry. This gauge symmetry perspective  is useful in extending the construction to other systems, e.g. dimer covers~\cite{ALW}.
For Ising systems at $h=0$ any preselected spin flip can be viewed as gauge transformation, under which the physics is preserved  but the interaction appears modified.  Due to this symmetry, 
$ 
\langle \prod_j R_{\ell_j}   \rangle  =1 $ for any collection of loops, i.e., $\partial \ell_j = \emptyset$.   For open ended lines, the Gibbs equilibrium system changes in an essential way, but modulo gauge transformation it is a homotopy invariant function of the line $\ell$, i.e., up to spin flips  it depends only on $\partial \ell$.     
 More explicitly, the choice of paths by which specified boundary points of $\ell_j$ are linked do not affect  the expectation values in \eqref{disorder}.   That is not quite true for the correlation functions of mixed operators, such as:
\be 
\langle \prod_k \sigma_{x_k} 
\prod_j R_{\ell_j}   \rangle ^{\G} \ : = \  
\frac{\sum_{\sigma} \prod_k \sigma_{x_k}  e^{-\beta (\prod_ j R_{\ell_j} H_\mathcal J)(\sigma)  }} 
{  Z_{\G}( \beta,0) }  \, .
\ee  
However, these just change sign whenever one of the lines is deformed over one of the spin sites, while the lines' end points are kept fixed.

\subsection{Fermionic order-disorder operator pairs}  \label{subsec:ODpairs}

Pairs of order and disorder variables exhibit an interesting emergent structure which,   as was pointed out in~\cite{KC71}, gives an explicit expression to Kaufman's  spinor operators.  The presence of spinor algebra was recognized early on in her  algebraic approach to the model's solution~\cite{Kauf49}.  Since then it has been realized to be also of great relevance for the field theoretic description of the  model's  critical state.   

\begin{figure}  \begin{center}
\includegraphics[width=0.4 \textwidth] {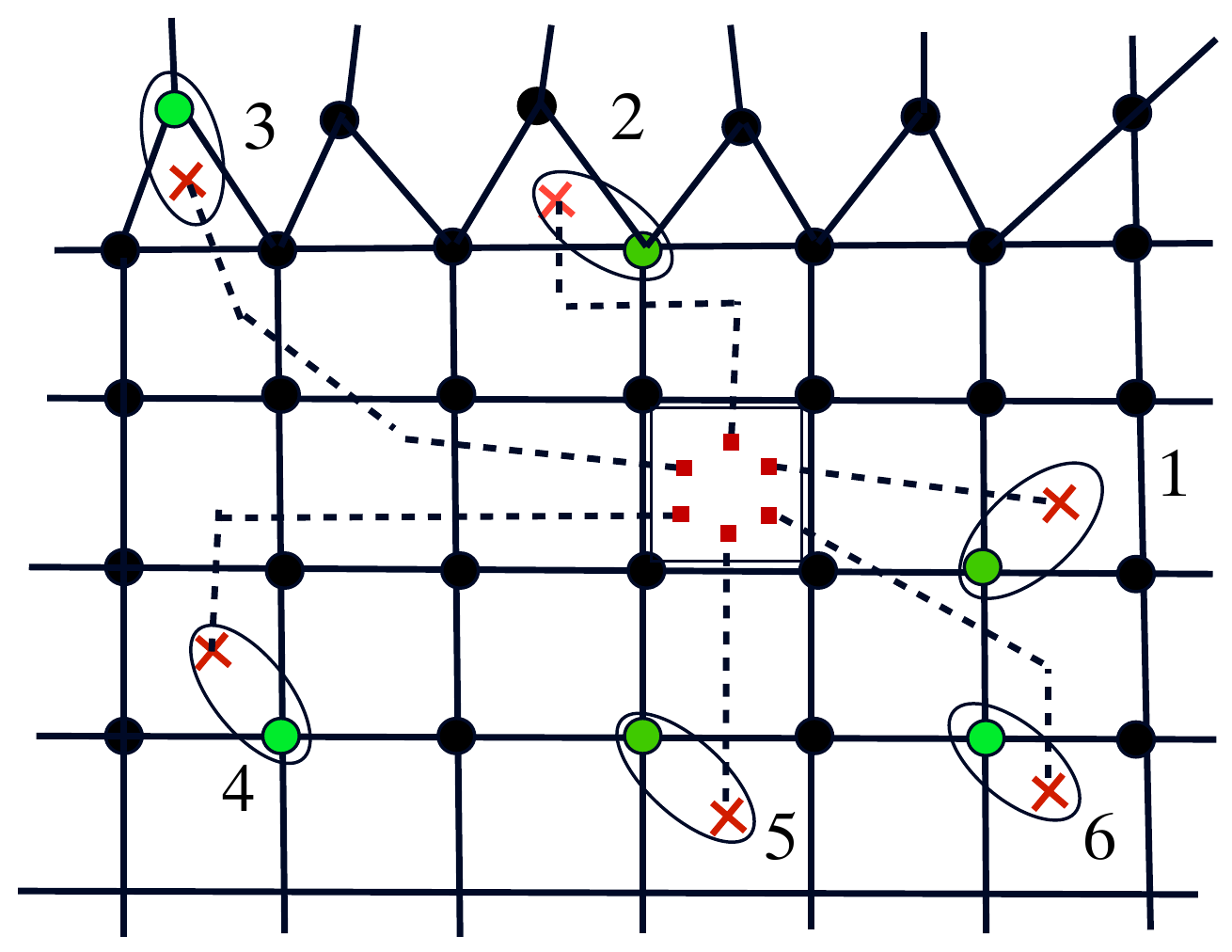} 
\caption{Order-disorder variables for a planar graph.  The disorder variables $\tau_{\ell_j}$ are defined through a  set of lines $\ell_j $, each linking a dual site $x_j^*\in \G^*$ which is a neighbor of $x_j$ in $\G \times \G^*$  with a common dual site $x_0^*\in \G^*$, called grand central. The disorder lines $ \ell_1, \ell_2, \dots $ are enumerated cyclicly in the order of the lines' emergence from the grand central $x_0^* $.    (The figure is reprinted from \cite{ALW} where  it was  used to demonstrate a related structure in dimer covers.)}
\label{fig:ord_disord}  
\end{center}
\end{figure} 
 \begin{definition}
Let  $\G=(\mathcal{V},\mathcal{E})$ be a planar graph, and $ \mathcal W: \mathcal E \mapsto \C$ a set of  edge weights ($\mathcal W_{x,y} = \tanh(\beta J_{x,y})$).   Pairs of  \emph{order-disorder}  operators  consist of  products
\be 
\mu(x_j,\ell_j) \ = \ \sigma _{x_j} R_{\ell_j}  
\ee 
where $ \sigma _{x_j}$ are Ising spins associated with sites $x_j\in \mathcal{V}$, and $R_{\ell_j} $ are the disorder operators associated with  piecewise differentiable lines in the plane, which avoid the sites of $\G$,  such that one of the end points $x_j^*$ of each $\ell_j$ is in a plaquette of $\G$ whose boundary includes $x_j$, and the other is in a common plaquette of $\G$  to which we refer as the \emph{grand central} $ x_0^*$. 
\end{definition}

The defining feature of fermionic variables is the change of sign  of their correlation functions, including with other variables, which occurs when pairs of fermionic variables are exchanged.   To see this occurring in the present context one should note that   the disorder operators $\mu_j$ are not quite local:  tethered to each  is a disorder line.   When one of these lines is made to pass over a spin site the product $\prod_j \mu_j$ changes sign.   However, the correlation function $\langle \prod_j \mu_j \rangle $ depends on the lines $\{\ell_j\}$ only through the combination of their end points and a collection of   binary  variables $\tau(\ell_j ,\sigma_{x_k}) = \pm 1$  which change  sign upon the passing of a line over a spin site.   

To organize the information in terms of a minimal set of variables it is convenient to adapt a convention, 
an example of which is indicated in Fig.~\ref{fig:canonical},
 for a `canonical assignment' of a disorder line $\ell_j$ to each potential disorder site $x^*_j$.   The correlation is then a function of  the configuration of the order-disorder sites $(x_j,x^*_j)$, multiplied by the  binary variable $\tau = \prod_{j,k} \tau_{j,k}$.  The latter changes sign whenever one of the disorder sites moves across a well defined set of lines in the plane in which the graph is embedded.   Equivalently, one may regard the correlation function as a single-valued function defined over a  two-sheeted cover of the graph.  This perspective is developed further in~\cite{CS12}.

In a sense similar to the above, a sign change occurs also when any of the order-disorder pair is  rotated by $2\pi$ (e.g. when the disorder site is taken around its associated spin site).   In this  the order-disorder operator pairs  resemble \emph{spinors}.   As was pointed out in~\cite{KC71},  that provides a strong hint that  in the scaling limit of the critical model the corresponding operator is of conformal dimension $1/2$.  Exact methods confirm that to be the case.   \\ 

 \begin{figure}  \begin{center}
\includegraphics[width=0.3\textwidth] {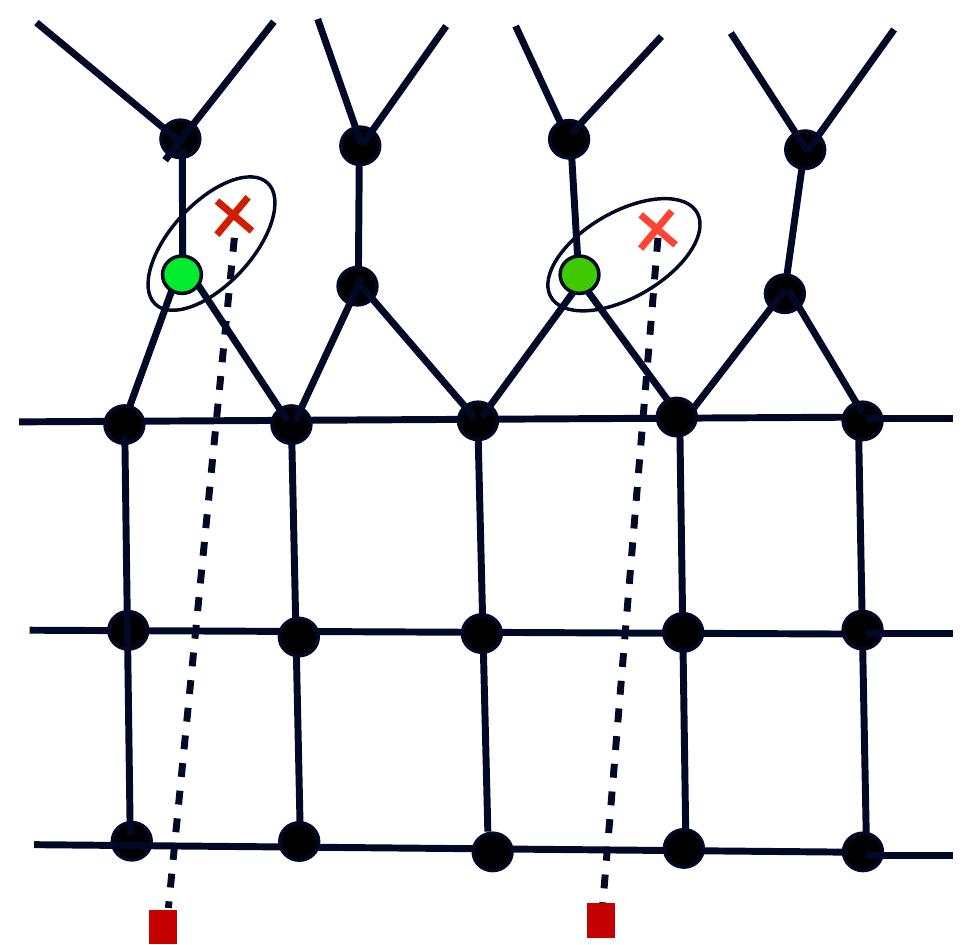} 
\caption{One of the choices for  association of disorder lines with disorder sites.  Under any such convention the product of disorder variables may be presented as associated with just disorder sites, and a  single $\pm1$ variable which changes sign whenever one of the disorder lines passes over a spin site.  }
\label{fig:canonical}  
\end{center}
\end{figure}

In discussing  pair correlations $ \mu_j = \mu(x_j,\ell_j)$, $ j = 1,2 $, it is  natural to replace the two lines by a single line $\ell_{1,2}$ with end points 
$\partial \ell_{1,2} = \{x_1, x_2\}$.  In the setup described above, this is achieved by linking the pairs of lines $(\ell_1, \ell_2)$  within  $ x_0^*$, possibly deforming the resulting curve to fit a convenient convention, and linking each endpoint $ x_j^* $ through a straight line to $ x_j $.   
In case of just two order-disorder pairs,  a natural convention is to link $\{x^*_1, x^*_2\}$ by a straight line adjusted so that the constructed lines would avoid the vertices of $\G$. Associated to
$  \ell_{1,2} $ is an angle
\be
\theta_{ 1,2} := \Delta \mbox{\rm arg}(\ell_{1,2}) \in (-\pi,\pi] 
\ee
which describes the change of the tangent's argument when starting from $ x_2$ and following along $   \ell_{1,2} $ to $ x_1 $ thus explicitly including the turn from the straight line connecting 
$ x_2 $ to $ x_2^* $ to the line connecting $ x_2^* $ and $ x_1^* $ and likewise the turn from that line to the straight line connecting $ x_1^* $ and $ x_1 $.

The main new result in this section is that the corresponding two point function is given by the Kac-Ward matrices' resolvent kernel
\be  \label{def:G} 
G(e_1, e_2) \ := \    
  \left( \frac{1} {1 - \mathcal {K W} }\right)_{e_1,e_2}  \, , \quad e_1,e_2 \in \widehat{\mathcal{E}} \, . 
\ee 
More explicitly:  

  \begin{theorem}\label{thm:Green}  For an Ising model on a finite planar graph $ \G $ with pair interactions whose sign can be arbitrary and any pair of order-disorder variables:
\begin{align}\label{eq:orddisG}
 \langle \mu_1 \mu_2 \rangle  & \equiv \langle \sigma_{x_1}  R_{\ell_1}   \sigma_{x_2}  R_{\ell_2}  \rangle  \notag \\
& = \ -  \Re\Big[ e^{ \tfrac{i}{2} \, \theta_{1,2}} \,
\sum_{\substack{e_1: o(e_1) = x_1 \\   e_2: t(e_2) = x_2}}   e^{\tfrac{i}{2} \angle(e_1,\mu_1)}\, 
W_{e_1} \, G(e_1, e_2) \, e^{- \tfrac{i}{2} \angle(e_2,\mu_2)}\Big]\, . 
\end{align}
The summation extends over oriented edges $ e_1,e_2 \in \widehat{\mathcal{E}} $ with origin $o(e_1) = x_1  $ and terminal point $ t(e_2) = x_2 $. The angles  $ \angle(e_j,\mu_j)  $ with $ j = 1,2 $ are the change of the tangent's argument from the straight line $ (x_j^*,x_j) $ 
to the origin of  $e_j$. 

 \end{theorem}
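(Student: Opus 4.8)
The plan is to prove \eqref{eq:orddisG} by matching two expansions in the edge weights $(W_e)$ in the small-weight regime \eqref{eq:asssmallW} and then extending by analyticity, in direct parallel with the proof of Theorem~\ref{thm:KW}. On the left I would first fuse the two disorder lines into a single vertex-avoiding line $\ell_{1,2}$ with $\partial \ell_{1,2} = \{x_1,x_2\}$, and invoke the high-temperature expansion of Lemma~\ref{lem:evenGraph}. Since each $R_\ell$ flips the couplings crossed by $\ell$, the effect in the graphical expansion is a sign $(-1)$ per crossing, so that the numerator of the correlator becomes the signed even-subgraph sum
\[
\langle \sigma_{x_1} R_{\ell_1} \sigma_{x_2} R_{\ell_2}\rangle\,\widetilde Z_\G \ = \ \sum_{\Gamma\,:\,\partial\Gamma=\{x_1,x_2\}} (-1)^{\sharp(\ell_{1,2}\cap\Gamma)}\, w(\Gamma)\,,
\]
with $\sharp(\ell_{1,2}\cap\Gamma)$ the number of transversal crossings of $\ell_{1,2}$ by the edges of $\Gamma$. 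Because $\partial\Gamma$ is fixed, each $\Gamma$ is an ordinary edge subset and this sum is manifestly multilinear in $(W_e)$.

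On the right I would expand the resolvent as a convergent path sum, $G(e_1,e_2)=\sum_{n\ge0}(\mathcal{KW})^n_{e_1,e_2}=\sum_{\widehat\gamma:e_1\to e_2}\chi_{\mathcal K}(\widehat\gamma)\,w(\widehat\gamma)$ over non-backtracking walks. Carrying out the weighted sum over $e_1$ with $o(e_1)=x_1$ and $e_2$ with $t(e_2)=x_2$ turns the bracket in \eqref{eq:orddisG} into a sum over open non-backtracking walks $\omega$ from $x_1$ to $x_2$, each carrying the combined phase $e^{\frac i2\theta_{1,2}}\,e^{\frac i2\angle(e_1,\mu_1)}\,\chi_{\mathcal K}(\omega)\,e^{-\frac i2\angle(e_2,\mu_2)}$. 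The boundary angles $\theta_{1,2}$ and $\angle(e_j,\mu_j)$ are tailored precisely so that this exponent records the total turning of the closed curve $\overline\omega$ obtained by completing $\omega$ with the straight stubs $(x_2,x_2^*)$, $(x_1^*,x_1)$ and the disorder line $\ell_{1,2}$. A Whitney-type computation, as in \eqref{Whitney}, then converts this into $-(-1)^{n(\overline\omega)}$; and since $n(\overline\omega)\equiv n(\omega)+\sharp(\omega\cap\ell_{1,2})\pmod 2$, the phase factors into the disorder sign $(-1)^{\sharp(\omega\cap\ell_{1,2})}$ times the self-crossing parity $(-1)^{n(\omega)}$ of the open arc. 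The overall $-\Re[\cdots]$ then reconciles the genuinely complex half-winding phase of an open arc with the real-valuedness of the correlator; I would trace its origin to the loopwise time-reversal invariance of $\mathcal K$, which relates $\omega$ and its reversal up to conjugation of the boundary half-phases.

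It remains to show that the product of this open-walk sum with $\widetilde Z_\G=\sqrt{\det(1-\mathcal{KW})}$, as given by Lemma~\ref{lem:max_KW} and Theorem~\ref{thm:KW}, reproduces the signed even-subgraph sum displayed above (which suffices, since $\langle\mu_1\mu_2\rangle$ is that sum divided by $\widetilde Z_\G$). Grouping the combined walk-plus-loops contributions by their total edge set $\Gamma=\omega\sqcup\bigsqcup_j\gamma_j$ — where the single open arc forces $\partial\Gamma=\{x_1,x_2\}$ — I would run the reduction of the proof of Theorem~\ref{thm:KW}: an edge-disjointness step removing every $\Gamma$ with a multiply covered edge, followed by a vertex-resolution identity. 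The new combinatorial input is that, at the two odd vertices $x_1,x_2$, exactly one incident strand runs through (the arc), while all other vertices are resolved into pairings as before; I would extend Lemma~\ref{lem:comb} to this situation to show that the signed sum over decompositions of a fixed $\Gamma$ again collapses to $+1$, leaving the common factor $(-1)^{\sharp(\ell_{1,2}\cap\Gamma)}w(\Gamma)$, which depends only on $\Gamma$. The finitely many resulting terms then extend the identity from small weights to all $(W_e)$ by analyticity.

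I expect the main obstacle to be precisely this last reduction for the \emph{resolvent} rather than the determinant. Two points must be controlled at once: first, the edge-disjointness cancellation, for which the clean argument of Lemma~\ref{lem:sq} does not apply directly to the non-symmetric object $G\cdot\sqrt{\det(1-\mathcal{KW})}$ and would instead have to be obtained either by realizing $G(e_1,e_2)\det(1-\mathcal{KW})$ as a bordered determinant/cofactor to which a multilinearity argument of the same flavor can be applied, or by an explicit involution pairing the non-edge-disjoint configurations; and second, the vertex-resolution count at the two defect vertices $x_1,x_2$, where the unpaired through-strand breaks the symmetric pairing exploited in Lemma~\ref{lem:comb}. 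The other delicate point is geometric rather than combinatorial: verifying that the Whitney bookkeeping of the turning angles and crossings at the junctions with $\ell_{1,2}$ matches the stated boundary phases exactly, thereby producing the prefactor $-1$ and justifying the use of $\Re$.
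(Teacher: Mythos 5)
Your route is genuinely different from the paper's, and as written it stops short at exactly the two places you flag. The paper does not redo any open-walk combinatorics. Instead it adds the fused disorder line $\ell$ (with $\partial\ell=\{x_1,x_2\}$) to $\G$ as a single auxiliary edge of weight $W_\ell$, obtaining a non-planar graph $\widetilde\G$ to which the already-proven Theorem~\ref{thm:beyondP} applies:
$\sqrt{\det(1-\widetilde{\mathcal K}\widetilde{\mathcal W})}=\sum_{\Gamma:\partial\Gamma=\emptyset}(-1)^{n_0(\Gamma)}\widetilde w(\Gamma)$.
It then takes $\frac{d}{dW_\ell}\ln(\cdot)\big|_{W_\ell=0}$ of both sides. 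On the subgraph side the terms containing $\ell$ are precisely the $\Gamma\subset\mathcal E$ with $\partial\Gamma=\{x_1,x_2\}$, weighted by $(-1)^{n_0(\Gamma)}$ counting crossings with the disorder line; normalizing by $Z_\G(\beta,0)$ and invoking \eqref{eq:spincorr} gives $\langle\mu_1\mu_2\rangle$. On the determinant side the same derivative is $-\tfrac12\tr\bigl(\tfrac{1}{1-\widetilde{\mathcal K}\widetilde{\mathcal W}}\widetilde{\mathcal K}P^{(\ell)}\bigr)\big|_{W_\ell=0}$, whose two diagonal entries (for $\ell$ and $\overline\ell$) are complex conjugates and combine into the $-\Re[\cdots]$ expression with the stated boundary phases. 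All of the hard cancellations you worry about --- edge-disjointness, vertex resolution, Whitney parity --- are inherited wholesale from Theorem~\ref{thm:beyondP} and never have to be re-established for open arcs.

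By contrast, your direct signed-walk matching is the Kager--Lis--Meester style argument, and it can be completed, but not by routine extension of the lemmas you cite. Concretely: (i) the multilinearity argument of Lemma~\ref{lem:sq} uses the vanishing of a $2\times2$ discriminant for the symmetric object $\det(1-\mathcal{KW})$ and does not transfer to the cofactor $G(e_1,e_2)\det(1-\mathcal{KW})$, so the removal of multiply-covered edges in the walk-plus-loops ensemble needs a separate involution or determinantal identity that you have not supplied; (ii) the resolution identity at the two odd-degree vertices $x_1,x_2$, where one strand passes through unpaired, is not Lemma~\ref{lem:comb} and requires its own induction; and (iii) the Whitney bookkeeping for an open arc closed up through $\ell_{1,2}$, which is where the prefactor $-1$, the $\Re$, and the three boundary angles must be verified to match, is asserted rather than checked. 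Each of these is fillable, but as the proposal stands they are gaps; the paper's auxiliary-edge differentiation is precisely the device that makes all three unnecessary.
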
 
\begin{proof}
Let $\widetilde \G$ be the graph obtained by adding to the edge set of 
 $\G$ the  piecewise differentiable line $\ell$ with $\partial \ell  =\{x_1, x_2\} $ which includes $ \ell_{1,2} $ as described above.
 Although  $\widetilde \G$ is not a planar graph,    
Theorem~\ref{thm:beyondP} applies to it, and implies that  
\begin{align} \label{55} 
 \sqrt {\det
\left( 1 -  \widetilde {\mathcal K} \widetilde {\mathcal W} \right) }  =  \sum_{\Gamma \in \widetilde{\mathcal{E}} \, : \, \partial \Gamma = \emptyset }  
  (-1)^{n_0(\Gamma)} \   \widetilde w(\Gamma) \, , & \\     \prod_{\{x,y\} \in \Gamma}    W_{\{x,y\} } & = \widetilde w(\Gamma)  \notag
\end{align}
 with $ \widetilde {\mathcal K} \widetilde {\mathcal W}  $ the natural extension of the Kac-Ward matrix $ \mathcal{K W} $ to the oriented edges of the augmented 
 graph $\widetilde \G$ with weight $ W_\ell $ on the added line $ \ell $ and its reverse $ \overline{\ell} $.  
Orienting $ \ell = (x_2,x_1) $ in the direction towards $ x_1 $, the angles arising in the definition of the matrix~\eqref{eq:defK} are
\be
 \widetilde{\mathcal K}_{\ell,e_1}  = \exp\left[ \tfrac{i}{2}\left( \angle(e_1,\mu_1)+   \theta_{1,2}  \right) \right]  \, , \qquad 
 \widetilde{\mathcal K}_{e_2,\ell}   = \exp\left[ - \tfrac{i}{2} \angle(e_2,\mu_2) \right] \, ,
\ee
and likewise for $ \overline{\ell} $.

For a given $ \Gamma \subset \widetilde{\mathcal{E}}$ with $ \ell \in  \Gamma $, the number  $ n_0(\Gamma) $ in the right side of~\eqref{55} equals the number of crossings of edges $ \mathcal{E} \subset \Gamma$ with the extra edge $ \ell $. 
 It is natural to split the sum in the right side into terms for which $  \ell  \not\in  \Gamma $ and $  \overline\ell  \not\in  \Gamma$ and the remaining terms.  According to~\eqref{Z_graph} the former sum up to $Z_\G(\beta,0) $. The latter contribution is linear in $ W_\ell $. Taking logarithmic derivatives  and evaluating at $ W_\ell = 0 $ we arrive at
 \begin{align}
 \frac{d}{d W_\ell} \ln \rm{LHS}  \, \eqref{55} \, \big|_{W_\ell = 0 } & = \frac{1}{ Z_\G(\beta,0) } \, \sum_{\substack{\Gamma \subset \mathcal{E} \\ \partial \Gamma = \{ x_1,x_2 \} }} (-1)^{n_0(\Gamma)} \, w(\Gamma)  = \langle \mu_1 \mu_2 \rangle  \, .
 \end{align}
  The last equality is due to~\eqref{eq:spincorr} with the factor $ (-1)^{n_0(\Gamma)} $ taking into account the reversal of the Ising couplings' signs on edges $ \mathcal{E} $ which are crossed by the disorder line $ \gamma_{1,2} $.

Proceeding as in~\eqref{eq:diffdet} the logarithmic derivative of the determinant with respect to the weight of the  added edge (and all other couplings  initially small enough) gives
\begin{align}\label{eq:77}
  & \frac{d}{d W_\ell} \ln \rm{RHS}  \, \eqref{55}  \, \big|_{W_\ell = 0 } \notag \\ & = - \frac{1}{2} \tr\left( \frac{1}{1-  \widetilde {\mathcal K} \widetilde {\mathcal W} }  \, \widetilde{\mathcal K} \, P^{(\ell)} \right) \, \Big|_{W_\ell = 0 }  = -  \frac{1}{2} \sum_{n=1}^\infty \tr  \left( \left(\widetilde{\mathcal K} \widetilde{\mathcal W} \right)^n \widetilde{\mathcal K} P^{(\ell)}  \right)  \, \Big|_{W_\ell = 0 } \notag \\
&=  -  \frac{1}{2} \sum_{e_1,e_2 \in \widehat{\mathcal{E}} } \left( \widetilde{\mathcal K}_{\ell,e_1} W_{e_1}  G(e_1,e_2) \, \widetilde{\mathcal K}_{e_2,\ell} +  \widetilde{\mathcal K}_{\overline{\ell},\overline{e}_2} W_{e_2}  G(\overline{e}_2,\overline{e}_1) \,  \widetilde{\mathcal K}_{\overline{e}_1,\overline{\ell}} \right) \, . 
 \end{align}
The second equality is a consequence of the resolvent expansion, which is 
valid for small enough couplings, cf.~\eqref{eq:asssmallW}. It facilitates the proof of the last line in which  $ W_\ell $ was taken to zero. The summation in the last line is non-zero unless  $ o(e_1) = x_1 $ and $ t(e_2) = x_2 $. In this case, we have
$$
\widetilde{\mathcal K}_{\overline{\ell},\overline{e}_2}\,   \widetilde{\mathcal K}_{\overline{e}_1,\overline{\ell}} =  \exp\left[ - \tfrac{i}{2}\left( \angle(e_1,\mu_1)-  \angle(e_2,\mu_2)+   \theta_{1,2}  \right) \right] \, . 
$$
Since also $ W_{e_2} \, G(\overline{e}_2,\overline{e}_1)  = W_{e_1} \, \overline{ G(e_1,e_2)} $, the sum in the right side of~\eqref{eq:77} is the real part of its first term and hence agrees with the right side of~\eqref{eq:orddisG}. 

Since both sides of~\eqref{eq:orddisG} are meromorphic functions of the couplings $ (W_e) $,  the validity of this equality extends by analyticity  from initially small to all values of the pair interactions.
\end{proof} 

\subsection{Linear equations for the mixed correlation function} 

A characteristic  property of the resolvent kernel of a local operator is that it satisfies a linear relation, and has a natural path expansion.   
Correspondingly, the resolvent of the Kac-Ward matrix, which was 
introduced in \eqref{def:G} satisfies the  linear relation 

\be \label{eq:Dirac} 
G(e_1,e_2) = \sum_{e'} (\mathcal {KW})_{e_1,e'} \, G(e',e_2) \ + \  \delta_{e_1,e_2}\, ,
\ee  
which has some  resemblance to the Dirac equation.  It  is reminiscent also of the one satisfied by  Smirnov's para\-fermionic observable, which  played an important role in the analysis of the critical model's scaling limit~\cite{CS12}. In fact, as has been pointed out by Lis~\cite{Lis14}, the Green function is equal to the complex conjugate of the Fermionic generating functions of~\cite{CS12}.  \\   

 \begin{figure}  \begin{center}
\includegraphics[width=0.3\textwidth] {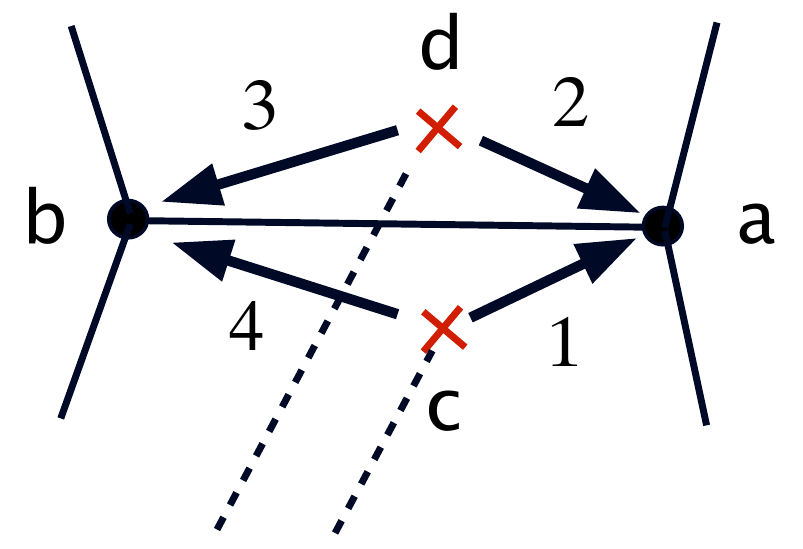} 
\caption{An edge with four possible disorder variables attached to it. The dashed lines indicate the location of the disorder lines.}
\label{fig:Dotsenko}  
\end{center}
\end{figure} 
For small enough weights $(W_e)$ (cf.~\eqref{eq:asssmallW}), the resolvent's definition may be expanded into a convergent path expansion: 
\be  G(e_1, e_2)   =   \left(\frac{1} {1 - \mathcal {K W} } \right)_{e_1,e_2}   =   
\sum_{ \widehat \gamma: e_1 \to e_2}  e^{\, i\Delta \text{arg}( \widehat \gamma)
  /2}  \, \prod_{j}  W_{ \widehat \gamma_j} 
\ee
Inserting this path expansion into~\eqref{eq:orddisG} yields a path expansion for the
 order-disorder correlators  $  \langle \mu_1 \mu_2 \rangle $. Such path expansions are useful in the derivation of other identities. One example are the linear relations initially derived by Dotsenko~\cite{Dot83}. They concern the four correlations $ \chi_j \equiv  \langle \mu_{j} \mu_0 \rangle $, $ j \in \{ 1, \dots, 4 \} $, obtained by fixing $ \mu_0 $
together with an edge $ e =(a,b) $ and its two adjacent plaquettes in each of which we pick one point $ c $ and $d $. In case the disorder 
lines attached to $c $, $ d $ are as depicted in Figure~\ref{fig:Dotsenko}, one has
\begin{align}\label{eq:Dotsenko}
	W_e \, \left( \chi_1 + \chi_4 \right) \ & = \   \left( \chi_2 - \chi_3 \right) \notag \\
	W_e \, \left( \chi_2 + \chi_3 \right) \ & = \   \left( \chi_1 - \chi_4 \right) \, . 
\end{align}
In  $ \chi_1 + \chi_4 $, the contribution of paths which exit at $ a$ towards the edge $ e $ cancel due to the angles's difference of $ 2 \pi $ when flowing into $ e $ along the respective disorder lines.
All edges  other than $ e $ which originate in $a  $ have an identical contribution in $ \chi_1 $ and $ \chi_4 $. The difference $  \chi_2 - \chi_3 $  on the other hand cancels all paths
emerging from $ b$ in directions other than $ \overline{e} $ on which the contributions from $ \chi_2 $ and $ - \chi_3 $ coincide. Due to the non-backtracking property, the paths which emerge from  $ \overline{e} $ at $ a $  contribute equally to  $ \chi_1 + \chi_4 $. The respective accumulated angles in these paths in $  \chi_1 $ and $  \chi_2 $
also agree. This proves the first relation~\eqref{eq:Dotsenko}. The other one is proven similarly.
 
It may be added that the fact that certain combinations of Ising model correlation functions obey simply stated
relations has been noted in a number of works albeit through rather different means, cf. \cite{WPW81,Pal07,CCK16} and references therein.  

\subsection{Pfaffian structure of correlations}

Let us mention in passing that the combinatorial methods presented here allow to prove also that 
in planar graphs the correlation functions of the order-disorder are not only fermionic (changing sign upon  exchange, as explained above) but, furthermore, resemble the correlation functions of \emph{non-interacting} fermions.  
This  feature of the model was noted and put to use  in the early algebraic approach  of Schultz, Mattis and Lieb~ \cite{SML64}, and numerous works which followed, in the context of a transfer matrix approach.  However it is valid also for arbitrary planar graphs. 

\begin{definition} A collection of order-disorder variable pairs is said to be  \emph{cyclicly ordered} if their lines do not cross, and the pairs   are labeled in a cyclic order  relative to $\ell_j$'s intersections with  the edge boundary of $ x_0^*$. 
\end{definition}

\begin{theorem}[Pfaffian correlations] \label{thm:Pf_OD}  
For a finite planar graph $\G$ with edge weights $ K \,: \mathcal E \mapsto \C$, for any collection of canonical pairs of  
order-disorder variables $ \mu_j=\mu (x_j,\ell_j) $,  $ j \in \{ 1, \dots , 2n\} $, ordered cyclicly relative to the grand central 
\be  \label{eq:Pf_gen} 
 \langle  \prod_{j=1}^{2n} \mu_j  \rangle
\ = \  \sum_{ \pi \in \Pi_n} \sgn(\pi) \, \prod_{j=1}^n  
\langle \mu_{ \pi(2j-1)} \,  \mu_{ \pi(2j)}  \rangle   \ \equiv \ \Pf_n\left( \langle \mu_j \mu_k   \rangle  \right)  \, ,
\ee
where $ \Pi_n $ is the collection of pairings of $ \{1, \dots, 2n \} $ and $ \sgn(\pi) $ is the pairing's signature.
\end{theorem}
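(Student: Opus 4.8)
The plan is to reduce the Pfaffian identity \eqref{eq:Pf_gen} to a statement about a single Kac-Ward determinant, using the same augmentation trick that proved Theorem~\ref{thm:Green}. First I would augment the planar graph $\G$ by all $n$ of the linking lines $\ell_{i,k}$ that join paired order-disorder sites, obtaining a graph $\widetilde \G$ drawn in $\R^2$ to which Theorem~\ref{thm:beyondP} applies. The key object is the determinant $\det(1 - \widetilde{\mathcal K}\widetilde{\mathcal W})$ regarded as a function of the $2n$ formal weights $W_{\ell_{i,k}}$ placed on the added lines. By Lemma~\ref{lem:sq} this determinant is the square of a polynomial that is \emph{multilinear} in each added weight, so its square root is itself multilinear. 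Expanding $\sqrt{\det(1-\widetilde{\mathcal K}\widetilde{\mathcal W})}$ in the added weights and extracting, via mixed logarithmic derivatives $\partial/\partial W_{\ell_{i,k}}$ evaluated at all added weights set to zero, the coefficient of the top-degree monomial should reproduce $\langle \prod_j \mu_j\rangle$ up to the normalization $Z_\G(\beta,0)$, by the even-subgraph representation \eqref{eq:spincorr} together with the sign bookkeeping $(-1)^{n_0(\Gamma)}$ for crossings, exactly as in the two-point case.

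The heart of the argument is then a linear-algebra identity. Writing $\mathcal R := (1-\mathcal{K W})^{-1}$ for the resolvent of the \emph{original} Kac-Ward matrix and expanding the logarithm of the determinant, the multilinear coefficient I need is a sum over ways of inserting the rank-two insertions $\widetilde{\mathcal K}P^{(\ell_{i,k})}$ and resumming the resulting traces. The standard fact is that for a matrix depending multilinearly on parameters through such rank-one/rank-two insertions, the square root of its determinant, differentiated once in each parameter, yields precisely a Pfaffian of the $2n\times 2n$ antisymmetric matrix whose entries are the single-insertion resolvent contractions. The off-diagonal entry of that antisymmetric matrix indexed by the pair $(j,k)$ is exactly the bracketed expression appearing in Theorem~\ref{thm:Green}, namely the combination of $G(e_1,e_2)$ with the edge weights and the angular phases that equals $\langle \mu_j\mu_k\rangle$. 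Assembling this identifies the extracted coefficient with $\Pf_n(\langle \mu_j\mu_k\rangle)$, and dividing by $Z_\G(\beta,0)$ gives \eqref{eq:Pf_gen}.

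The main obstacle I anticipate is the sign accounting. A Pfaffian is sensitive to the cyclic ordering of its arguments, whereas the determinant expansion produces contractions with signs dictated by the \emph{crossing} parities $(-1)^{n_0}$ of the added lines and by the angular winding phases $e^{i\theta/2}$ attached to each disorder line. One must check that, under the cyclic-ordering hypothesis (the lines $\ell_j$ do not cross and emanate from the grand central $x_0^*$ in cyclic order), the combinatorial signs produced by the determinantal/Pfaffian resummation match $\sgn(\pi)$ for every pairing $\pi\in\Pi_n$, with no residual global sign. The non-crossing hypothesis is what makes these two sign conventions coincide: it guarantees that the crossing number $n_0$ of the full configuration decomposes additively over the chosen pairing in a way compatible with the signature. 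I would verify this first for $n=1$ (where it is Theorem~\ref{thm:Green}) and $n=2$, then argue the general case by the antisymmetry structure inherited from Lemma~\ref{lem:sq}. As in the proof of Theorem~\ref{thm:Green}, the identity is established first for couplings small enough to justify the resolvent expansion \eqref{eq:asssmallW}, and then extended to arbitrary $(W_e)$ by analyticity, both sides being meromorphic in the couplings.
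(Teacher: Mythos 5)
A preliminary remark: the paper does not actually prove Theorem~\ref{thm:Pf_OD}; it explicitly omits the argument and defers to~\cite{ADTW17} and to the dimer-cover analogue in~\cite{ALW}, whose ``combinatorial path methods'' are said to apply verbatim. So there is no in-paper proof to match your proposal against. That said, your strategy --- augment $\G$ by lines tied to the order--disorder sites, apply Theorem~\ref{thm:beyondP} and Lemma~\ref{lem:sq} to the augmented Kac--Ward determinant, and identify the top multilinear coefficient in the added weights with $\widetilde Z_\G\,\langle\prod_j\mu_j\rangle$ via the even-subgraph expansion \eqref{eq:spincorr} --- is the natural extension of the proof of Theorem~\ref{thm:Green} and is consistent in spirit with the cited derivations. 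The identification of the left side is sound (one minor slip: for $n\ge 2$ the \emph{mixed logarithmic} derivative is a cumulant, not the top coefficient of the multilinear polynomial; you want $\partial_{W_{\ell_1}}\cdots\partial_{W_{\ell_n}}\sqrt{\det}\,\big|_{0}$ itself, and the log-expansion then produces products of cyclic traces that must be reorganized).

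The genuine gap is in the second half. The ``standard fact'' you invoke --- that differentiating $\sqrt{\det(1-\mathcal M)}$ once in each of $n$ parameters entering through rank-two insertions yields the Pfaffian of the $2n\times 2n$ antisymmetric matrix of single-insertion resolvent contractions --- is not a standard fact in that generality; it is essentially the statement being proved, and it holds only because of the specific time-reversal/twist antisymmetries of $\mathcal K$ (the same structure that made $A$ a multiple of the identity in Lemma~\ref{lem:sq}). Even granting such an identity, its entries are $G$-contractions between the \emph{added} oriented edges $\ell_{i,k}$, i.e.\ sums over paths joining endpoints of different added lines decorated by phases and by crossing signs relative to the particular lines you drew. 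Identifying these with the canonical two-point functions $\langle\mu_j\mu_k\rangle$ of \eqref{eq:Pf_gen} --- whose disorder lines all pass through the grand central --- requires a line-deformation argument in which every homotopy move across a spin site or another disorder line produces a sign, and it is exactly here that the cyclic-ordering hypothesis must be used to show the accumulated signs reproduce $\sgn(\pi)$ for every pairing, with no residual global sign. You correctly flag this as ``the main obstacle,'' but checking $n=1,2$ and appealing to ``the antisymmetry structure'' does not close it: the sign bookkeeping for general $n$ (equivalently, the fact that the crossing parity of a chord diagram of cyclically ordered points equals the signature of the pairing, combined with Lemma~\ref{lem:comb}-type resummations at the grand central) is the combinatorial heart of the theorem and is the part carried out in~\cite{ALW}. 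As written, the proposal is a credible roadmap but not a proof.
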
 

In case the spin variables lie along a connected boundary segment, the  lines $\ell_j$ can be chosen so they do not cross any edge,  and the operators $\tau_{\ell_j} $ act as identity.
Consequently, the order-disorder pairs reduce to  regular spin operators.  In this case the  $n$-point  boundary correlation functions have the Pfaffian structure, as was first noted by Groeneveld,  Boel, and Kasteleyn  \cite{GBK77}. 

In case the monomers  $ \{ x_{2j-1},x_{2j} \}$ are pairwise adjacent, 
the disorder lines may be chosen so that their actions are pairwise equivalent, and thus cancel each other.  In that case the pairwise product of two order-disorder variables reduces to a  an ordinary product of monomers, i.e., a dimer $ 
\mu_{2j-1} \mu_{2j}  =  \sigma_{x_{2j-1}} \sigma_{x_{2j}}$, so that  
\be
 \big\langle  \prod_{j=1}^{2n} \mu_j  \big\rangle =  \big\langle  \prod_{j=1}^{n}\sigma_{x_{2j-1}} \sigma_{x_{2j}} \big\rangle \, .
\ee
Related observations were made and applied in the discussion of critical exponents by Kadanoff~\cite{Kad66,KC71}.

 We omit here the proof of Theorem~\ref{thm:Pf_OD}, since the statement is more thoroughly  discussed in~\cite{ADTW17}.  Furthermore, 
a  derivation by combinatorial path methods  of a closely related dimer-cover version of Theorem~\ref{thm:Pf_OD}  was recently presented in \cite{ALW}, with which Fig.~\ref{fig:ord_disord} is shared.  Its argument applies verbatim  here as well. 
Let us add that related statements can also be found in both the classical text of the Ising model~\cite{MW73} and in the review of more recent developments by  Chelkak, Cimasoni and Kassel \cite{CCK16}.

\appendix

\section{From the Kac-Ward formula to Onsager's explicit solution}  

Following is a summary of the calculation  by which  Onsager's free energy  \eqref{eq:Onsager} emerges from the Kac-Ward determinantal formula \eqref{eq:KW}, which we repeat here: 
\be  \label{KW_det}
 Z_\G(\beta,0) =  \left( 2^{|\mathcal{V}|} \prod_{\{x,y\}\subset  \mathcal{E}} \cosh (\beta J_{x,y}) \right)  \sqrt{\det (1 - \mathcal{K W} )}
\ee
with $ \mathcal{K} $  given by \eqref{eq:defK}  and $ 
W= \tanh(\beta J) $ 

Taking the graph to be the two dimensional torus  
$ \G = \mathbb{T}^2_L $ of linear size $ L $,
it is most convenient to evaluate 
$\det (1 - \mathcal{KW} )$ not for the free boundary conditions operator, with which the relation \eqref{KW_det} holds in  its simplest form, but for its periodicized  version $\mathcal{KW}^{(\rm{per})}$.  
Under this change, the model ceases to be planar and the determinant is no longer directly giving the partition function.  We address this small discrepancy after an outline of the calculation of the determinant.  

The periodicised Kac-Ward operator commutes with lattice shifts.  Hence its  determinant factorizes into a product of the determinants of $4\times 4$ matrices, indexed by the Fourier modes, i.e. momenta $(k_1,k_2) \in \frac{2\pi}{L} (\Z _{\rm{mod } \, L})^2$.  
More explicitly, one may take as the basic building blocks  of $ \widehat{\mathcal E} $ the four edges $ \rightarrow = (0,(1,0) ) , \uparrow = (0,(0,1)), \leftarrow = (0,(-1,0)), \downarrow = (0,(0,-1)) $ which originate at the vertex $ 0 $.  All other edges  $ e \in \widehat{\mathcal E}$ result from translation of these four  and may hence be written as $e= (x , \alpha) $ with $ x \in \mathcal{V} $ and $  \alpha \in \{\rightarrow , \uparrow ,  \leftarrow, \downarrow \}$. 
 The space $ \ell^2(\widehat{\mathcal E}) $ decomposes into a direct sum of
 \be
 \mathcal{H}_k := \left\{ \psi(e) = e^{ik \cdot x} \, \left( \begin{matrix} \psi_\rightarrow \\ \psi_\uparrow \\ \psi_\leftarrow\\ \psi_\downarrow \end{matrix} \right) \,  \Bigg| \,  \psi_{\alpha} \in \C \right\} \, , \quad k \in \frac{2\pi}{L} \left(\mathbb{Z}_{\mod L}\right)^2 \, . 
 \ee
The matrix $ \mathcal{K}^{(\rm{per})}$ is block diagonal in this decomposition, and on $  \mathcal{H}_k $  it acts as
 \be
\mathcal{K}^{(\rm{per})}_k :=   \left( \begin{matrix} 1 & 0 & e^{-i\pi/4} & e^{i\pi/4} \\ 0 & 1 & e^{i\pi/4} & e^{-i\pi/4} \\ e^{i\pi/4} & e^{-i\pi/4} & 1 & 0 \\ e^{-i\pi/4} & e^{i\pi/4} & 0 & 1 \end{matrix} \right)   \left( \begin{matrix} e^{-ik_1} & 0 & 0 & 0 \\ 0 & e^{ik_1} & 0 & 0 \\ 0 & 0 & e^{ik_2} & 0 \\ 0 & 0 & 0 & e^{-ik_2} \end{matrix} \right)  \, . 
 \ee
 
Calculations of the corresponding $4\times 4$ determinants, and a bit of algebraic manipulation, then yield,  
\begin{multline}  \label{Riemann} 
\frac{1}{|\mathcal{V}|} \log \det\left( 1 - \mathcal{K}^{(\rm{per})}\mathcal {W}\right) \ = \  \frac{1}{2} \ln[ 2W (1-W^2)]  \ + \    \\ 
+ \ 
\frac{1}{L^2} \sum_{k \in \frac{2\pi}{L} \left(\mathbb{Z}_{\!\!\mod L} \right)^2 }  \ln \left\{ \left[  Y(\beta) + \frac{1}{Y(\beta)}  -2 \right ] + E(k) \right\} \, .  
\end{multline} 
with $Y(\beta) := \sinh(2\beta J ) $ and $ E(k) :=  2 \left [ \sin^2(k_1/2) + \sin^2(k_2/2)\right]$.   \\ 

At any    $\beta\neq \beta_c$, the latter defined by $Y(\beta_c)  = 1$,  
  the argument of the logarithm in \eqref{Riemann}  is  positive and continuous uniformly in $(k_1,k_2)$.  The sum can be viewed as a Riemann approximation of an  integral, and  one gets:  
\begin{multline} \label{eq:per}
\lim_{L \to \infty}  \frac{1}{L^2} \ln \det\left( 1 - \mathcal{K}^{(\rm{per})}\mathcal {W}_{[0,L]^2}\right) \ = \  
   \frac{1}{2}   \ln  \cosh( 2\beta J)   \ +    \\ 
+ \ \frac{1}{2} \int \ln \left\{ \left[Y(\beta)+ Y(\beta)^{-1} - 2 \right]+ E(k_1,k_2) \right\} \frac{dk_1 dk_2}{(2 \pi)^2}   
\, 
\end{multline}

Recalling that boundary conditions have only a surface effect on the energy density, it seems natural to expect that for  $\beta \neq \beta_c$ this would also be the case for the effect of the change in the above calculation from free to periodic boundary conditions on 
$\mathcal K$.   Under this assumption the above calculation yields Onsager's formula \eqref{eq:Onsager}.

The incompleteness of this argument is highlighted by
the observation that for $\beta = \beta_c$  the above computed determinant vanishes  due to the multiplicative contribution of the $k=(0,0)$ mode.      
The gap can be patched in a number of ways, of which we like to mention two (see also~\cite{PW55}).

First, a direct analysis of the free boundary condition determinant is also possible and is only a bit  more involved than the periodic case, cf.~\cite{KLM13}.  The two  are related by a spectral shift of order $O(L)$ which  for $\beta \neq \beta_c$ does not affect the result.   An alternative proof is outlined next. 

\section{Interpreting the periodic determinant} \label{sec:interp_per_bc}

 As shown drastically by the  vanishing of 
 $\det\left( 1 - \mathcal{K}^{(\rm{per})}\mathcal {W}\right) $, this  determinant is not a partition function (which cannot vanish at real couplings).   What it is can be answered with the help of Theorem~\ref{thm:beyondP}, from which we learn:   

\begin{figure}  \begin{center}
\includegraphics[width=0.35 \textwidth] {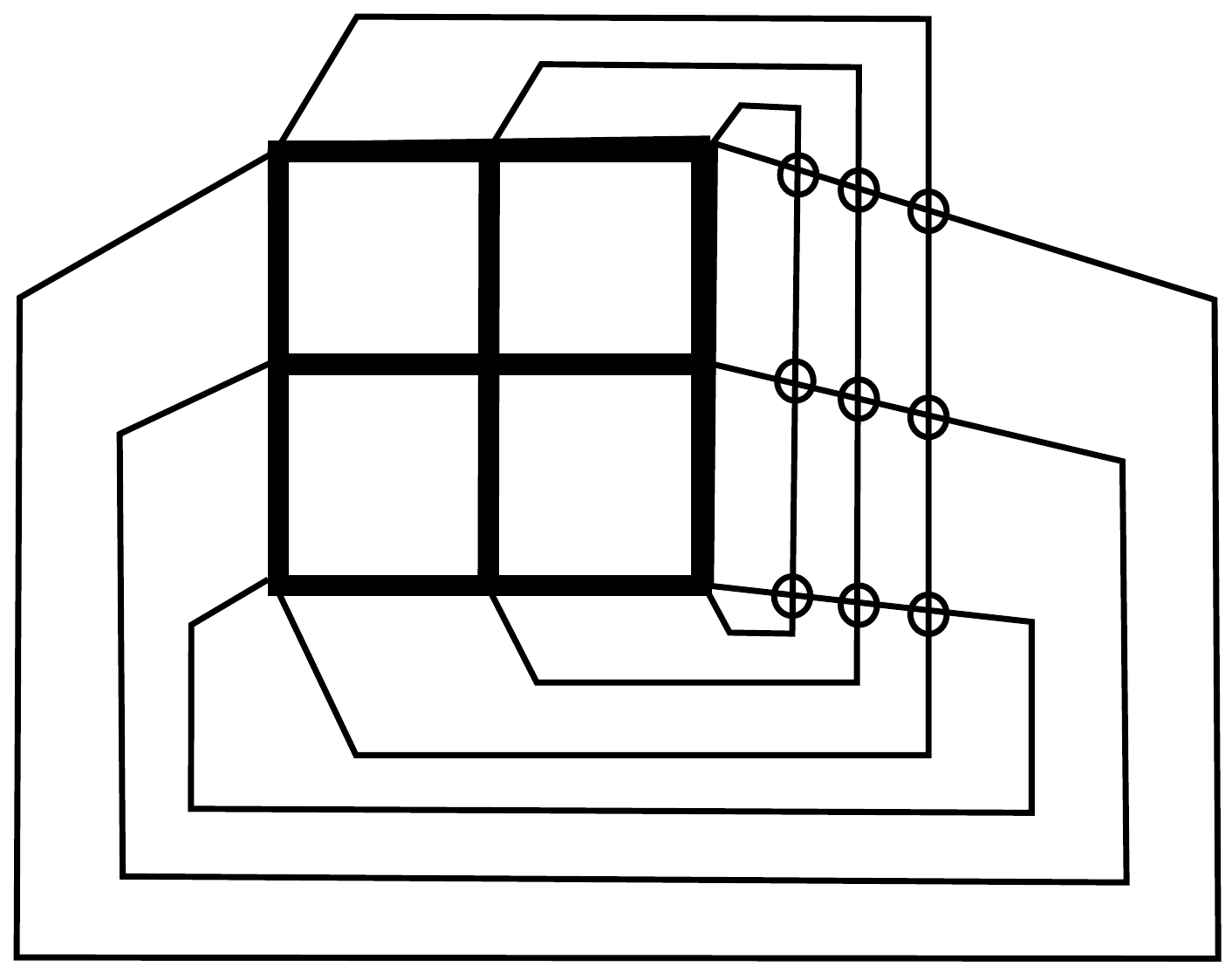} 
\caption{A graph supporting the $3\times3$ Ising spin system with periodic boundary condition.  Starting from the free boundary conditions periodicity is obtained by adding two sets of handles.  The circles mark their non-vertex crossings. }
\label{fig:per_bc}  
\end{center}
\end{figure}

\begin{lemma}  For the Ising model on $(0,L]^2 \cap \mathbb{Z}^2 $ with periodic boundary conditions, 
\begin{eqnarray}  \label{det_AP}  
\det\left( 1 - \mathcal{K}^{(\rm{per})}\mathcal {W}\right) & = &   
\sum_{\Gamma : \partial \Gamma =   \emptyset}   (-1)^{F_1(\Gamma) +  F_2(\Gamma)}  \, (-1) ^{ F_1(\Gamma) \cdot  F_2(\Gamma)} \, w(\Gamma) 
  \notag \\[2ex] 
& = &   \widehat {Z}_L^{\rm (per)}  \; 
\langle   (-1) ^{ F_1(\Gamma) +  F_2(\Gamma) + F_1(\Gamma) \cdot  F_2(\Gamma)} \rangle^{(per)} 
\end{eqnarray}  
where 
 $F_j(\Gamma)$ are the numbers of the 
the  horizontal ($j=1$) and  vertical ($j=2$) handles included in $\Gamma$  
(as indicated in Fig.~\ref{fig:per_bc}).  In the last expression,  the average is with respect to the probability distribution over the even subgraphs induced by the Gibbs state with periodic boundary conditions. 
\end{lemma}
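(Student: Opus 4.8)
The plan is to treat the torus $\mathbb{T}^2_L$ as a faithful projection of a (non-planar) graph and to read the determinant off Theorem~\ref{thm:beyondP}. Concretely, I start from the free--boundary planar grid on $(0,L]^2$ and restore periodicity by adjoining two bundles of handle edges: $L$ \emph{horizontal} handles identifying the right and left boundary columns, and $L$ \emph{vertical} handles identifying the top and bottom rows, drawn as in Figure~\ref{fig:per_bc}. For this augmented graph the Kac--Ward matrix is exactly the periodicised operator $\mathcal K^{(\mathrm{per})}$ of the Appendix, so Theorem~\ref{thm:beyondP} applies and, through~\eqref{eq:beyondP}, expresses $\det(1-\mathcal K^{(\mathrm{per})}\mathcal W)$ by the signed even-subgraph sum $\sum_{\Gamma:\partial\Gamma=\emptyset}(-1)^{n_0(\Gamma)}\,w(\Gamma)$, with $n_0(\Gamma)$ the number of non-vertex crossings of $\Gamma$ in this projection. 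The whole task thus reduces to evaluating the parity $(-1)^{n_0(\Gamma)}$ in terms of the handle counts $F_1(\Gamma),F_2(\Gamma)$.

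The second step is the crossing bookkeeping, which I would split into two independent contributions. First, I draw the bundles so that handles of the same orientation run parallel (hence never cross) while the horizontal bundle meets the vertical bundle transversally, each horizontal handle crossing each vertical handle exactly once; the circled intersections of Figure~\ref{fig:per_bc} are precisely these, so the handle--handle crossings contribute $(-1)^{F_1(\Gamma)\,F_2(\Gamma)}$. Second, in order that the handle transitions carry the straight--continuation phases of~\eqref{eq:defK} built into $\mathcal K^{(\mathrm{per})}$ (a handle leaving $(L,y)$ rightward must re-enter at $(1,y)$ still pointing rightward), each handle arc must wind once around the drawing, contributing one extra full turn and hence a factor $-1$ to every loop that traverses it; equivalently, one assigns negative weights to the handle edges. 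Collecting these over the handles used by $\Gamma$ gives the remaining factor $(-1)^{F_1(\Gamma)+F_2(\Gamma)}$. Multiplying the two contributions yields $(-1)^{n_0(\Gamma)}=(-1)^{F_1(\Gamma)+F_2(\Gamma)+F_1(\Gamma)F_2(\Gamma)}$, which is the sign in the assertion. The identification of the resulting faithful--projection matrix with $\mathcal K^{(\mathrm{per})}$ is then checked directly against the explicit block form $\mathcal K^{(\mathrm{per})}_k$ (the phases $e^{\pm i\pi/4}$) of the Appendix.

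With the sign in hand the first displayed equality follows, and the second is immediate: factoring out $\widehat Z^{(\mathrm{per})}_L=\sum_{\Gamma:\partial\Gamma=\emptyset}w(\Gamma)$ from~\eqref{Z_graph} rewrites the signed sum as $\widehat Z^{(\mathrm{per})}_L\,\langle(-1)^{F_1+F_2+F_1F_2}\rangle^{(\mathrm{per})}$, the average being over the even-subgraph (Gibbs) distribution with periodic boundary conditions. The hard part will be the sign computation of the middle step, and in particular the clean separation of the two parities: one must verify both that $\mathcal K^{(\mathrm{per})}_k$ genuinely encodes straight handle continuations (so that the winding factor is exactly $(-1)^{F_1+F_2}$, i.e.\ the periodic/odd spin structure rather than another), and that no handle crosses a lattice edge, so that $n_0(\Gamma)$ receives contributions only from the $F_1F_2$ handle--handle intersections. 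This is also the one place where the genus-$1$ topology genuinely intervenes: the twist anti-symmetry underlying Lemma~\ref{lem:sq} must be re-examined for the non-contractible paths of the torus, and it is exactly the resulting $-1$ on the three homologically nontrivial classes of $\Gamma$ that obstructs $\widehat Z^{(\mathrm{per})}_L$ from being given by this single determinant and forces the average-of-sign correction.
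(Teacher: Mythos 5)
Your proposal follows essentially the same route as the paper: apply Theorem~\ref{thm:beyondP} to the handle-augmented faithful projection of the torus, identify $n_0(\Gamma)=F_1(\Gamma)\,F_2(\Gamma)$ from the handle--handle crossings, and absorb the extra $2\pi$ of winding on each handle (the discrepancy between the straight-continuation phases of $\mathcal{K}^{(\rm{per})}$ and the projection's true Kac--Ward angles) into the factor $(-1)^{F_1(\Gamma)+F_2(\Gamma)}$, before recasting the signed sum as a Gibbs average --- exactly the paper's argument. The only blemish, which you inherit from the statement itself since \eqref{eq:beyondP} yields $\sqrt{\det(1-\mathcal{K}^{(\rm{per})}\mathcal{W})}$ rather than the determinant, is that the left-hand side should carry a square root (or the right-hand side be squared).
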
 

\begin{proof} 
Equation~\eqref{det_AP}  follows from \eqref{eq:beyondP}, for whose application it should be noted that 
\be 
 n_o(\Gamma) \ = \  F_1(\Gamma) \times   F_2(\Gamma)
\ee 
 and the Kac-Ward phase for each of the added handles  is 
 $e^{i (2\pi)/2} = -1$.  The product of these factors yield the oscillatory term in the first equation.   The second is just a recast of the first relation in terms of the corresponding  Gibbs state average. 
\end{proof} 

To shed more light on \eqref{det_AP} it helps to supplement it with rigorous results which are expressed in terms of bounds. 
For the binary term in \eqref{det_AP} to take the value  $(-1)$ it is required that  at least one of the $F_j(\Gamma)$ be odd.  However, under this condition $\Gamma$ has odd flux through each of the vertical (if $j=1$) or each of the horizontal (if $j=2$) vertex-avoiding  lines.  For even subgraphs $\Gamma$ each of these conditions implies that $\Gamma$ includes a connected path of edges of total length $L$.    
However, through the non-perturbative sharpness of phase transition theorem~\cite{ABF87,DRT17}  for all $\beta < \beta_{LRO}$, with the latter defined by the onset of   long range order (LRO),  the above event's probability  decays  exponentially in $L$.   
 Hence as $L\to \infty$ the last factor in $\eqref{det_AP}$ converges  to $1$, and consequently the limit computed in \eqref{eq:per} yields the  Onsager formula \eqref{eq:Onsager}.   By duality arguments the conclusion is valid  also for the dual range.  Taking into account the continuity of the free energy in $\beta$, we learn that   \eqref{eq:Onsager} is valid 
throughout the closed set $\R \backslash  (\beta_{\rm LRO} , \beta_{\rm LRO}^*)$.

  By other techniques -- outgrowths of either the breakthrough work of  Harris (cf.~\cite{G06}), or the Russo-Seymour-Welsh theory~\cite{DRT16} -- it is known that 
  the Ising model does not have a phase in which LRO is exhibited simultaneously by the model and its dual.  This implies that 
  \be
  \beta_{\rm LRO} = \beta_{\rm LRO}^* = \beta_c \, .
  \ee   
  Hence the above arguments prove \eqref{eq:Onsager}  for all $\beta \in \R$.

\section*{Acknowledgements}
This work was supported in part by the NSF grant DMS-1613296, the Weston Visiting Professorship at the Weizmann Institute (MA) and  a PU Global Scholarship (SW).  MA  thanks the Faculty of Mathematics and Computer Sciences and the Faculty of Physics  at WIS  for the hospitality enjoyed there. We thank Hugo Duminil-Copin and Vincent Tassion for the pleasure of collaboration on topics related to this work.

\end{document}